\begin{document}

\newtheorem{theorem}{Theorem}
\newtheorem{experiment}{Experiment}

%
% --- Author Metadata here ---
%\conferenceinfo{WOODSTOCK}{'97 El Paso, Texas USA}
%\CopyrightYear{2007} % Allows default copyright year (20XX) to be over-ridden - IF NEED BE.
%\crdata{0-12345-67-8/90/01}  % Allows default copyright data (0-89791-88-6/97/05) to be over-ridden - IF NEED BE.
% --- End of Author Metadata ---

\title{A Tighter Real-Time Communication Analysis for Wormhole-Switched Priority-Preemptive NoCs}

\numberofauthors{3}

\author{
% You can go ahead and credit any number of authors here,
% e.g. one 'row of three' or two rows (consisting of one row of three
% and a second row of one, two or three).
%
% The command \alignauthor (no curly braces needed) should
% precede each author name, affiliation/snail-mail address and
% e-mail address. Additionally, tag each line of
% affiliation/address with \affaddr, and tag the
% e-mail address with \email.
%
% 1st. author
\alignauthor
Borislav Nikoli\'{c}\\
       \affaddr{CISTER/INESC TEC, ISEP}\\
       \affaddr{Polytechnic Institute of Porto}\\
       \affaddr{Portugal}\\
       \email{borni@isep.ipp.pt}
% 2nd. author
\alignauthor
Leandro Soares Indrusiak\\
       \affaddr{Real-Time Systems Group}\\
       \affaddr{Dept. of Computer Science}\\
       \affaddr{University of York, UK}\\
       \email{lsi@cs.york.ac.uk}
% 3rd. author
\alignauthor
Stefan M. Petters\\
       \affaddr{CISTER/INESC TEC, ISEP}\\
       \affaddr{Polytechnic Institute of Porto}\\
       \affaddr{Portugal}\\
       \email{smp@isep.ipp.pt}
}
\date{27 June 2014}
% Just remember to make sure that the TOTAL number of authors
% is the number that will appear on the first page PLUS the
% number that will appear in the \additionalauthors section.

\maketitle
\begin{abstract}

	Simulations and runtime measurements are some of the methods which can be used to evaluate whether a given NoC-based platform can accommodate application workload and fulfil its timing requirements. Yet, these techniques are often time-consuming, and hence can evaluate only a limited set of scenarios. Therefore, these approaches are not suitable for safety-critical and hard real-time systems, where one of the fundamental requirements is to provide strong guarantees that all timing requirements will always be met, even in the worst-case conditions. For such systems the analytic-based real-time analysis is the only viable approach.
	
	In this paper the focus is on the real-time communication analysis for wormhole-switched priority-preemptive NoCs. First, we elaborate on the existing analysis and identify one source of pessimism. Then, we propose an extension to the analysis, which efficiently overcomes this limitation, and allows for a less pessimistic analysis. Finally, through a comprehensive experimental evaluation, we compare the newly proposed approach against the existing one, and also observe how the trends change with different traffic parameters.
	
\end{abstract}

\section{Introduction}

	The technological advancements in the semiconductor technology have led to a stage where further processing power enhancements related to single-core platforms are no longer affordable. Consequently, chip manufacturers took a design paradigm shift and started integrating multiple cores within a single chip~\cite{SCC,Tilera}. Nowadays, platforms consisting of several cores ({\em multi-cores}) and more than a dozen of cores ({\em many-cores}) are commonplace in many scientific areas, e.g. high-performance computing, while they are still an emerging technology in others, like safety critical and real-time systems.
	
	The Network-on-Chip (NoC)~\cite{Benini_DeMicheli_02} architecture became the prevailing interconnect medium in many-cores, due to its scalability potential~\cite{Kavaldjiev_Smit_03}. For NoCs, one of the most popular data transfer techniques is the {\em wormhole switching}~\cite{Ni_McKinley_93}, because of its good throughput and small buffering requirements. Currently available wormhole-switched NoCs employ a wide range of diverse strategies when transferring the data, e.g. different sizes of basic transferable units -- {\em flits}, different router operating frequencies, the (in)existence of virtual channels, different arbitration policies. These design choices have a significant impact on both the performance and the analysis.
	
	The behaviour and performance of NoC-based platforms can be evaluated via simulations and/or runtime measurements. These evaluation techniques are suitable options for some areas, e.g. general-purpose and high-performance computing. However, in some other areas, like the safety-critical and real-time computing, the most important aspect is {\em not} the overall system performance, but instead, whether a platform can fulfil all timing requirements of a given application workload, even in the worst-case conditions. Simulations and runtime measurements are not suitable approaches for these domains, because they are usually time consuming, and hence can evaluate only a small fraction of all possible scenarios, without any information whether the worst-case scenario was indeed captured, or not. In such cases, the analytic-based real-time analysis is the only viable approach.
	
	Yet, analysis-based approaches are mostly performed at design-time. Thus, their efficiency highly depends on the amount of predictability of the entire system, whereas any non-deterministic system behaviour has to be accounted for in the analysis with a certain degree of pessimism. A more pessimistic analysis may lead to a significant resource overprovisioning and/or underutilisation of platform resources. Conversely, a less pessimistic approach allows to save on design costs, e.g. by choosing a cheaper platform with fewer resources, which still guarantees the fulfilment of all timing constraints. Moreover, with the less pessimistic analysis, the resources of the platform can be exploited more efficiently, for example, by accommodating the additional workload, or by decreasing the power consumption via core shutdowns and smaller router frequencies. Thus, deriving the analysis with as less pessimism as possible is of paramount importance in the real-time domain.
	
	{\bf Contribution:} In this paper the focus is on the real-time communication analysis for wormhole-switched priority-pre\-emptive NoCs. Specifically, the main objective is to derive guarantees that all traffic flows complete their traversal over the NoC without violating posed timing constraints, even in the worst-case conditions. First, we elaborate on the existing analysis~\cite{Zheng_Burns_08} ({\bf Section~\ref{sec:background}}), and identify one source of pessimism ({\bf Section~\ref{sec:pessimism}}). Then, we propose the extension, which overcomes the aforementioned limitation, and allows for a less pessimistic analysis ({\bf Section~\ref{sec:approach}}). Finally, we compare the new approach with the existing one, and observe how the trends change with different traffic parameters ({\bf Section~\ref{sec:observations}, Section~\ref{sec:example} and Section~\ref{sec:experiments}}).
	
\section{Related Work}
\label{sec:related_work}

	Contrary to the popular belief, the wormhole switching technique~\cite{Ni_McKinley_93} is not a novelty. In fact, it has been introduced more than $ 20 $ years ago. However, it has been largely neglected, since the alternative {\em store-and-forward switching} technique was providing satisfactory results~\cite{Kavaldjiev_Smit_03}. Yet, as the data that has to be transferred kept increasing, buffering within routers became a challenge, which lately brought the wormhole switching back into focus. Nowadays, some NoC-based many-cores employ this technique, e.g.~\cite{SCC,Tilera}.
	
	When organising the access to the interconnect medium, NoC-based platforms employ a diverse range strategies. All approaches can be broadly classified into two categories: {\em contention-free} techniques and {\em contention-aware} techniques. An example of the former category is the AEthereal platform~\cite{Goossens_DR_05}, where a time-division-multiple-access approach is used to organise the access to interconnect resources. Conversely, if contentions among traffic flows are allowed, the performance of the system may improve, but the analysis of the system behaviour becomes more complex as well. This is especially emphasized in cases where a platform provides only a single virtual channel. For this model, several analyses have been proposed to compute the worst-case traversal times of traffic flows~\cite{Dasari_NNP_13, Ferrandiz_FF_09, Ferrandiz_FF_11, Qian_LD_09}, however, due to complex interference patterns, the obtained values may be overly pessimistic~\cite{Dasari_NNP_13}.
	
	If a platform provides multiple virtual channels~\cite{Dally_92,Dally_Seitz_87}, the benefits are twofold: (i) the throughput significantly increases~\cite{Dally_92,Dally_Seitz_87}, and (ii) traffic preemptions can be implemented~\cite{Song_KY_97}. Shi and Burns~\cite{Zheng_Burns_08} proposed the real-time analysis assuming: per-flow distinctive priorities, per-priority virtual channels and flit-level preemptions. Subsequently, Nikoli\'{c} et al.~\cite{Nikolic_APP_13} relaxed the requirements for virtual channels, proving that the analysis can remain unaffected as long as the number of virtual channels is at least equal to the maximum number of contentions for any port of any router, which is a realistic assumption. Indeed, assuming a 2D mesh NoC interconnect with a $ 10 \times 10 $ grid and $ 400 $ traffic flows, on average, only $ 8 $ virtual channels are needed~\cite{Nikolic_APP_13}, while platforms with $ 8 $ virtual channels (e.g.~\cite{SCC}) were already available more than $ 5 $ years ago.
	
	So far, the aforementioned model (a wormhole-switched priority-aware NoC with flit-level preemptions via virtual channels) appears to be a promising steps towards real-time NoCs. Therefore, providing any improvement over the state-of-the-art analysis~\cite{Zheng_Burns_08} would be a valuable contribution, primarily to the safety-critical and the real-time domain. Our work is motivated with this reasoning.
	
\section{System Model}
\label{sec:model}

\subsection{Platform}

	\begin{figure}
		\centering
		\includegraphics[width=0.9\columnwidth]{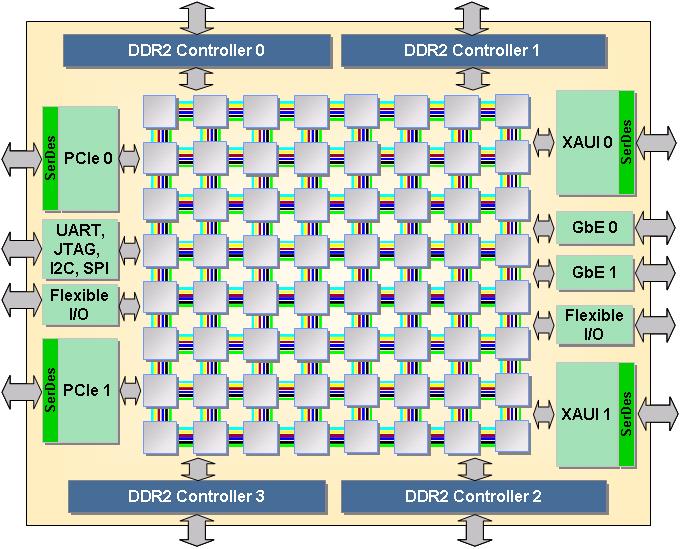}
		\caption{\label{f:platform} TILEPro64 Platform}
	\end{figure}

	We consider a many-core platform $ \Pi $, comprised of $ m \times n $ tiles, interconnected with a 2D mesh NoC, like the system depicted in Figure~\ref{f:platform}. Each tile contains a single core and a single router. Each router has a set of ports, which are used to exchange the data with the local core, as well as with the neighbouring routers. Each pair of communicating ports is connected with two unidirectional links. Additionally, the platform employs a wormhole switching technique with the credit-based flow-control mechanism, where acknowledgements use separate physical links (see the upper part of Figure~\ref{f:router}). This means that, prior to sending, each data packet is divided into small fixed-size elements called {\em flits}. The header flit establishes the path, and the rest of the flits follow in a pipeline manner. Moreover, we assume a deterministic, dimension-ordered XY routing mechanism, which is deadlock and livelock free~\cite{Hu_Marculescu_03}.
	
	The platform provides virtual channels. A virtual channel is implemented as an additional buffer within every port of every router. Virtual channels are used as an infrastructure to implement the priority-preemptive router-arbitration policy, i.e. to store flits of blocked/preempted packets and offer the network resources to other packets which can freely progress. This significantly improves the throughput (performance)~\cite{Dally_92,Dally_Seitz_87}, and also makes the system behaviour more predictable and deterministic, which is of paramount importance for the real-time analysis. The requirement is that the number of virtual channels is at least equal to the maximum number of contentions for any port of any router. As already described (Section~\ref{sec:related_work}), this guarantees that flits of each packet will have an available virtual channel in every port along the path~\cite{Nikolic_APP_13}. The router architecture is depicted in the upper part of Figure~\ref{f:router}. Notice, that our target platform is very similar to existing many-core systems (e.g.~\cite{SCC,Tilera}) and that the required hardware features already exist within these platforms. Note, our method also applies to any NoC topology and any deterministic routing technique, which guarantee a single continuous contention domain between any pair of contending traffic flows. For clarity purposes, the focus of this paper is on 2D mesh NoCs with the XY routing mechanism.
	
	\begin{figure}
		\centering
		\includegraphics[width=0.9\columnwidth]{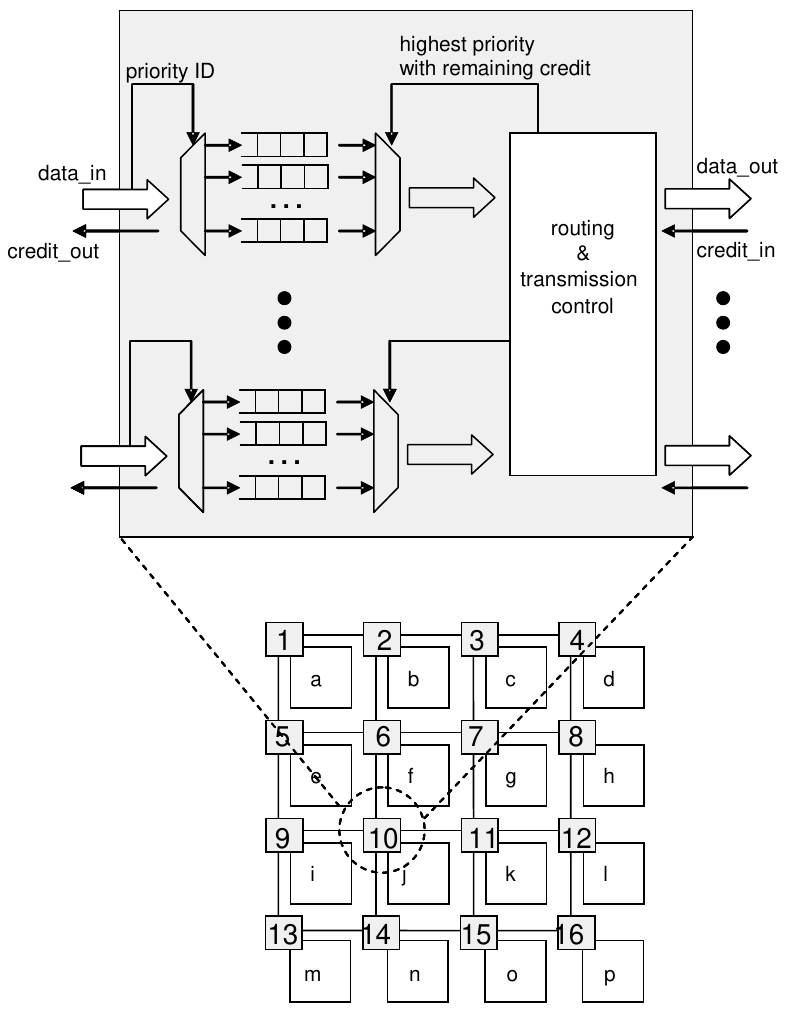}
		\caption{\label{f:router} Platform and router architecture}
	\end{figure}
	
\subsection{Workload}

	The workload consists of a traffic flow-set $ {\cal F} $, which is a collection of flows $ \{ f_1, f_2, ... f_{z-1}, f_z \} $. Each flow $ f_i \in {\cal F} $ is characterised by a deterministic XY-routed path from its source core to its destination core, expressed as a set of traversed links $ {\cal L}_i $, where $ |{\cal L}_i| $ denotes the cardinality of the path, which is, in the literature, often called {\em the number of hops}. Additional flow parameters are: its size $ size(f_i) $, its unique priority $ P_i $, its minimum inter-release period $ T_i $, and its deadline $ D_i $. In this paper we assume that all flows have implicit deadlines, i.e. $ \forall f_i \in {\cal F} : D_i = T_i $. Each flow $ f_i $ generates a potentially infinite sequence of packets. A packet released at the time instant $ t $ should be received by the destination no later than $ t + D_i $. If it fails to do so, it has {\em missed a deadline}. If guarantees can be provided that a flow will not miss its deadline, even in the worst-case conditions, it is considered {\em schedulable}. If all flows of the flow-set are schedulable, the flow-set is considered {\em schedulable}.
	
\section{Background and Preliminaries}
\label{sec:background}

	In this section we will describe the fundamental concepts and properties of the real-time communication analysis for NoCs, which are essential for understanding our approach.
	
	{\bf Prop.~1: }A traversal time of a packet, belonging to a flow $ f_i $, termed $ C_i $ (Equation~\ref{eq:iso_lat}), is equal to the time it takes for its header flit to establish the path and reach the destination, augmented by the transfer time of the rest of the flits, once the header reaches the destination. $ d_r $ is the delay to route a packet header, $ d_l $ is the latency to transfer one flit across one link, while $ size(flit) $ represents the size of one flit. $ C_i $ is in the literature also known as the {\em basic network latency}, and it covers the case when the packet does not suffer any interference from the packets of other higher-priority flows.
	
	\begin{equation}
		\label{eq:iso_lat}
		\footnotesize
		C_i =  \overbrace{|{\cal L}_i| \times d_l + \left(|{\cal L}_i| - 1\right) \times d_r}^{\mbox{\footnotesize header to reach the destination}} + \overbrace{\left\lceil \frac{size(f_i)}{size(flit)} \right\rceil \times d_l}^{\mbox{\footnotesize the rest of the flits}}
	\end{equation}
	
	{\bf Prop.~2: }A flow $ f_i $ can be preempted only by higher-priority flows which share a part of the path with it, called {\em directly interfering flows}. Let $ {\cal F}_D(f_i) $ be a set of directly interfering flows of $ f_i $. Formally:
	
	\begin{equation}
		\label{eq:dir_inf}
		\footnotesize
		\forall f_j \in {\cal F} : \left( P_j > P_i \wedge {\cal L}_j \cap {\cal L}_i \not= \emptyset \right) \Rightarrow f_j \in {\cal F}_{D}(f_i)
	\end{equation}
	
	{\bf Prop.~3: }Interference caused by a single preemption of any $ f_j \in {\cal F}_D(f_i) $ to $ f_i $ is equal to its traversal time $ C_j $.
	
	{\bf Prop.~4: }The worst-case scenario for the flow $ f_i $ occurs when, during its traversal, all its directly interfering flows release their packets at their respective maximum possible rates. The traversal time of $ f_i $ in this scenario is called {\em the worst-case traversal time}, $ R_i $ in Equation~\ref{eq:response_old}. $ J_j^R $ is the release jitter~\cite{Audsley_BRTW_93}, and it denotes the maximum deviation of two successive packets released from their period $ T_j $. $ J_j^I $ is the interference jitter, and it symbolises the interference caused to $ f_i $ by other flows which are not directly interfering with it, but are directly interfering with $ f_j $ and hence can indirectly cause additional interference to $ f_i $. An interested reader can find more details about Equation~\ref{eq:response_old} in the previous work of Shi and Burns~\cite{Zheng_Burns_08}. Notice that Equation~\ref{eq:response_old} has a recursive notion, i.e. the term $ R_i $ exists on both sides of the equation. Therefore, it is solved iteratively~\cite{Audsley_BRTW_93}, until reaching a fixed converging point. If it cannot be solved (e.g. in the case of starvation), the flow is deemed unschedulable. 
	
	\begin{equation}
		\label{eq:response_old}
		\footnotesize
		R_i = C_i + \sum\limits_{\forall f_j \in {\cal F}_D(f_i)} \left\lceil \frac{R_i + J_j^R + J_j^I}{T_j} \right\rceil \times C_j
	\end{equation}
	\vspace{2ex}
	
	{\bf Prop.~5: }If the computed worst-case traversal time of the flow $ f_i $ is less than or equal to its deadline, i.e. $ R_i \leq D_i $, this means that the packets of $ f_i $ will never miss the deadline due to the other traffic existing within the same platform, and thus $ f_i $ is schedulable. By computing the worst-case traversal times of all flows of the flow-set, it can be determined whether the entire flow-set is schedulable or not.
	
	Note, the value of $ R_i $, obtained by solving Equation~\ref{eq:response_old}, does not present the exact worst-case traversal time that $ f_i $ might experience while traversing, but rather its analytically computed safe upper-bound estimate (see Figure~\ref{f:exmpl_results}). In some cases, the upper-bounds obtained by solving Equation~\ref{eq:response_old} may be pessimistic, which is thoroughly discussed in the next section.
	
\section{Proposed Approach}

\begin{figure}
		\centering
		\includegraphics[width=0.9\columnwidth]{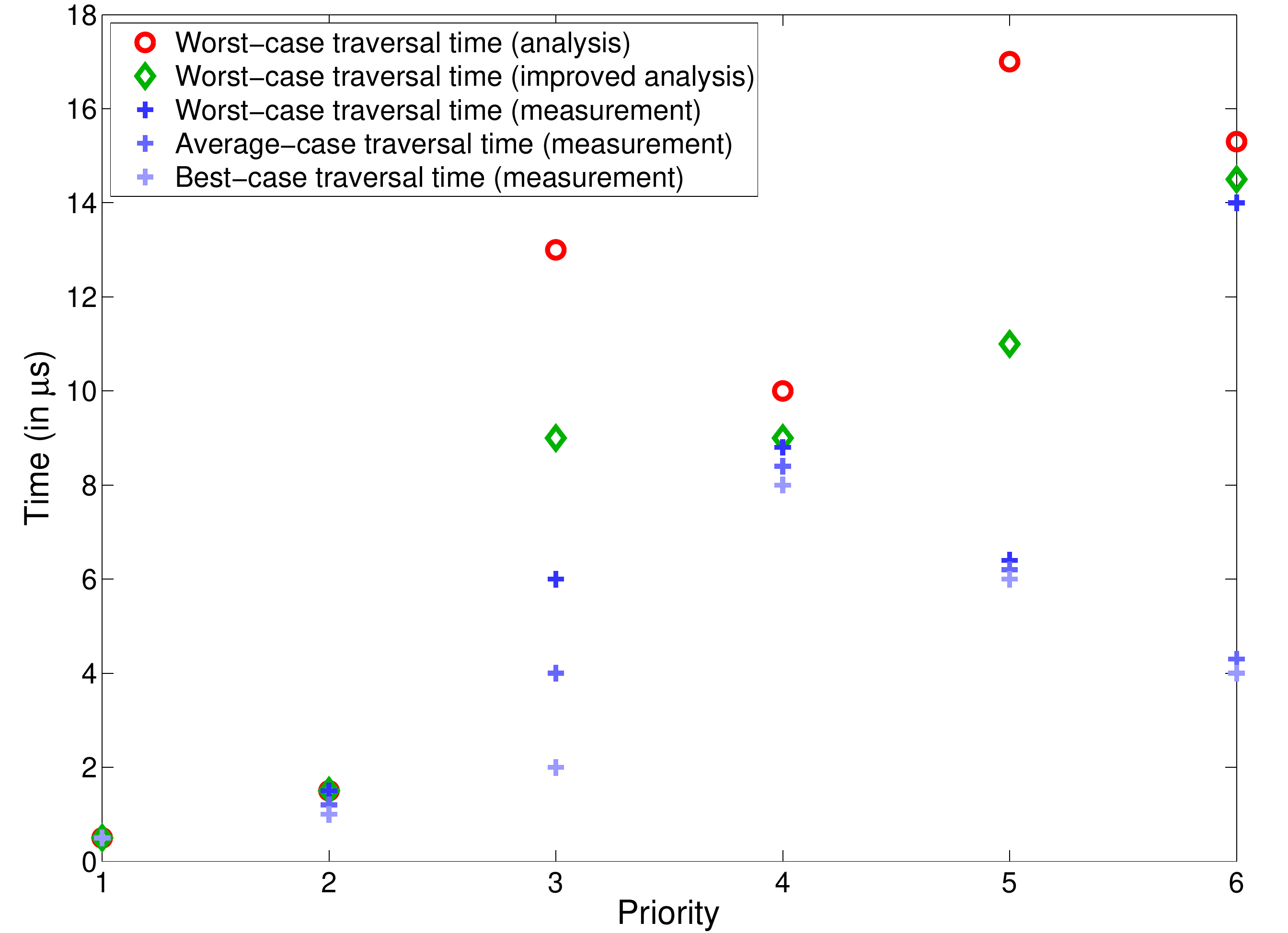}
		\caption{\label{f:exmpl_results} Example of flow traversal times}
	\end{figure}
	
\subsection{Work Objectives}
\label{sec:objectives}
	
	The work objectives can be summarised as follows. Given the platform $ \Pi $ and the flow-set $ {\cal F} $, provide the analysis to obtain the upper-bound estimates on the worst-case traversal times of individual flows, such that the obtained bounds are as tight as possible (with as less pessimism as possible).
	
	We illustrate this with the example given in Figure~\ref{f:exmpl_results}. Each flow has its traversal times obtained via (i) the analysis, and (ii) measurements. The goal is to reduce the gap between the analytic and the measured worst-case values, i.e. to provide the analysis which will produce significantly tighter (less pessimistic) estimates, e.g. green diamonds in Figure~\ref{f:exmpl_results}.

\subsection{Source of Pessimism in Existing Analysis}
\label{sec:pessimism}

	\begin{figure}[t!]
		\centering
		\includegraphics[width=0.9\columnwidth]{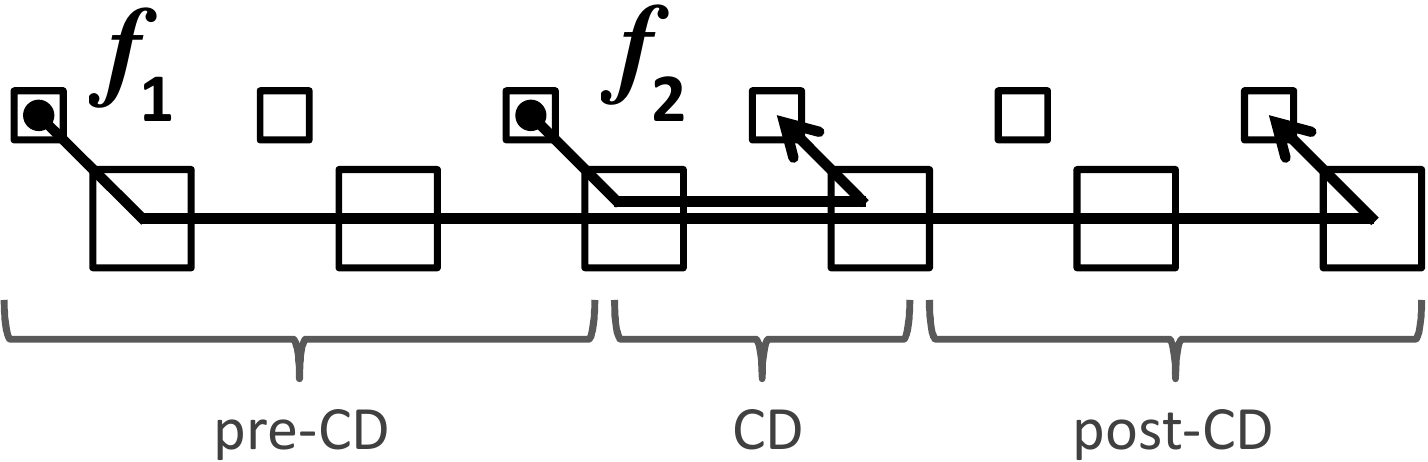}
  		\caption{\label{f:exmpl1} Example of traffic flows}
	\end{figure}

	In the existing analysis~\cite{Zheng_Burns_08} (Section~\ref{sec:background}), it is assumed that the flow under analysis suffers the interference from the entire traversal of a higher-priority flow (see Prop.~3). However, that assumption may be overly pessimistic, as described below.
	
	Consider the example of two flows, $ f_1 $ and $ f_2 $, illustrated in Figure~\ref{f:exmpl1}, where $ P_1 > P_2 $. Small and big rectangles depict cores and routers, respectively. Recall, that in the existing analysis, the entire traversal of $ f_1 $ would be considered as the interference to $ f_2 $ (Prop.~3). Let us closely investigate the paths of these flows. $ f_1 $ and $ f_2 $ share the common part of the path, which consists of one link, and which we hereafter refer to as {\em the contention domain} -- CD\footnote{The XY-routing mechanism assures that any two flows with a direct interference relationship have exactly one CD}. Let us divide the path of $ f_1 $ into $ 3 $ parts: (i) before $ f_1 $ and $ f_2 $ start sharing a common part of the path -- {\bf pre-CD}, (ii) while $ f_1 $ and $ f_2 $ share the common part of the path -- {\bf CD}, and (iii) after $ f_1 $ and $ f_2 $ stop sharing the common part of the path -- {\bf post-CD}.
	
	Notice that while the header flit of $ f_1 $ traverses pre-CD, $ f_1 $ does not cause any interference to $ f_2 $ (see Figure~\ref{f:exmpl2}). Thus, in the existing analysis, the traversal of the header flit of $ f_1 $ through pre-CD is unnecessarily considered as the interference that $ f_1 $ causes to $ f_2 $. If we exclude that delay from the interference that $ f_1 $ causes to $ f_2 $, we can decrease the pessimism of the analysis and obtain a tighter upper-bound estimate on the worst case traversal time of $ f_2 $.
	
	Once the header flit of $ f_1 $ reaches CD, $ f_2 $ starts suffering the interference (see Figure~\ref{f:exmpl3}). When the tail flit of $ f_1 $ leaves CD, $ f_2 $ stops suffering the interference from $ f_1 $, and may continue its progress (see Figure~\ref{f:exmpl4}). Thus, the traversal of the tail flit of $ f_1 $ through post-CD is also unnecessarily considered as the interference in the existing analysis. By excluding this delay from the interference that $ f_1 $ causes to $ f_2 $, we can further decrease the pessimism of the analysis and obtain even tighter upper-bound estimate on the worst-case traversal time of $ f_2 $.
	
	\begin{figure}[t!]
		\centering
  		\begin{minipage}{1.0\columnwidth}
  			\centering
  			\includegraphics[width=0.7\columnwidth]{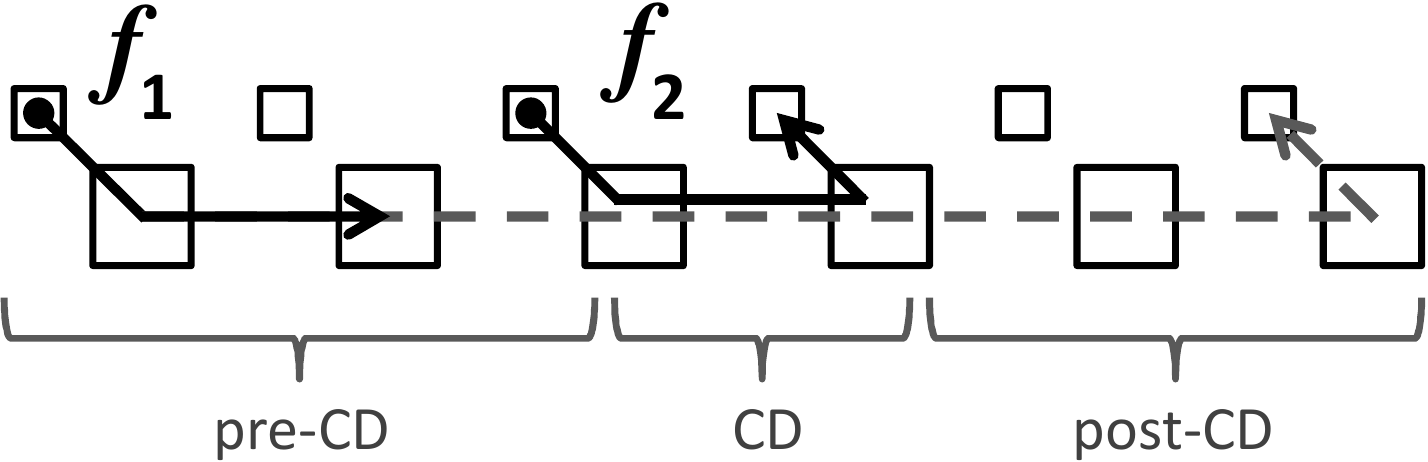}
  			\subcaption{\label{f:exmpl2} $ f_2 $ does not suffer the interference when the header flit of $ f_1 $ is in pre-CD}
  		\end{minipage}
  		\begin{minipage}{1.0\columnwidth}
  			\centering
			\includegraphics[width=0.7\columnwidth]{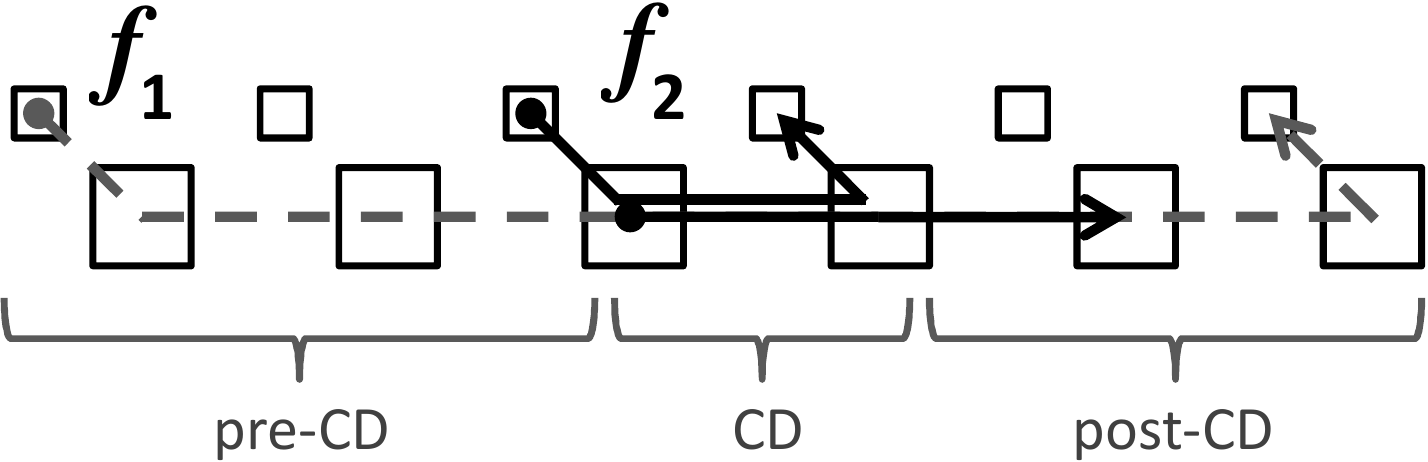}
  			\subcaption{\label{f:exmpl3} $ f_2 $ suffers the interference when the header flit of $ f_1 $ is in either CD or post-CD, and the tail flit of $ f_1 $ is in either pre-CD or CD}
  		\end{minipage}
  		\begin{minipage}{1.0\columnwidth}
  			\centering
			\includegraphics[width=0.7\columnwidth]{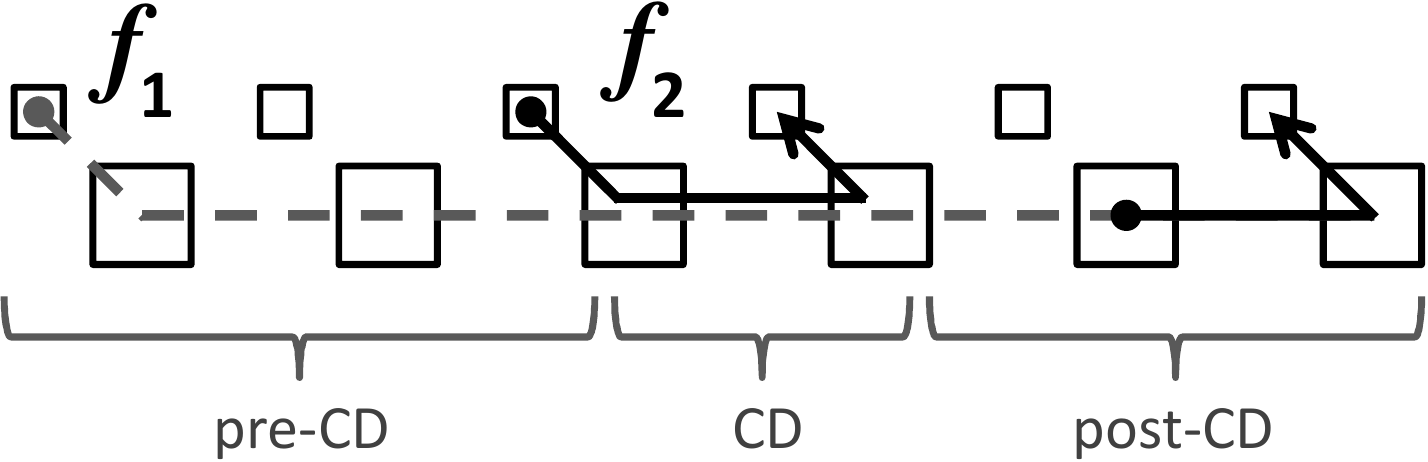}
  			\subcaption{\label{f:exmpl4} $ f_2 $ does not suffer the interference when the tail flit of $ f_1 $ is in post-CD}
  		\end{minipage}
  		\caption{Detailed analysis of the interference that $ f_1 $ causes to $ f_2 $}
	\end{figure}
	
\subsection{Less Pessimistic Worst-Case Analysis}
\label{sec:approach}

	In this section we propose an extension to the existing analysis. Specifically, based on the observations from the previous section, we develop a new, less pessimistic method to compute the interference that directly interfering flows cause to the flow under analysis. Subsequently, we can obtain tighter upper-bound estimates on the worst-case traversal times.
	
	Consider again the example illustrated in Figure~\ref{f:exmpl1}. Let the parts of $ {\cal L}_1 $ constituting sections pre-CD, CD and post-CD (with respect to flow $ f_2 $) be $ {\cal L}_{1,2}^{pre-CD} $, $ {\cal L}_{1,2}^{CD} $ and $ {\cal L}_{1,2}^{post-CD} $, respectively. It is obvious that the union of these parts forms the entire path of $ f_1 $, i.e. $ {\cal L}_1 = {\cal L}_{1,2}^{pre-CD} \cup {\cal L}_{1,2}^{CD} \cup {\cal L}_{1,2}^{post-CD} $. Now, as described in the previous section, the less pessimistic estimate of the interference that a single traversal of $ f_1 $ causes to $ f_2 $, denoted as $ I_{1,2} $ (Equation~\ref{eq:new_inf}), can be computed by subtracting (i) the time it takes a header flit of $ f_1 $ to traverse pre-CD $ \sigma^{pre-CD}_{1,2} $, and (ii) the time it takes a tail flit of $ f_1 $ to traverse post-CD $ \sigma^{post-CD}_{1,2} $ , from the entire traversal time of $ f_1 $.
	
	\begin{equation}
		\label{eq:new_inf}
		\footnotesize
		I_{1,2} = C_1 - \sigma^{pre-CD}_{1,2} - \sigma^{post-CD}_{1,2}
	\end{equation}
	
	where $ \sigma^{pre-CD}_{1,2} $ and $ \sigma^{post-CD}_{1,2} $ are computed as follows:
	
	\begin{equation}
		\footnotesize
		\sigma^{pre-CD}_{1,2} = \overbrace{|{\cal L}_{1,2}^{pre-CD}| \times d_l + max\left\{0, \left(|{\cal L}_{1,2}^{pre-CD}| - 1\right)\right\} \times d_r}^{\mbox{\footnotesize header flit traversal through pre-CD}}
	\end{equation}
	
	\begin{equation}
		\footnotesize
		\sigma^{post-CD}_{1,2} = \overbrace{|{\cal L}_{1,2}^{post-CD}| \times d_l}^{\mbox{\footnotesize tail flit traversal through post-CD}}
	\end{equation}
	
	If we apply this reasoning when computing the interference that a flow under analysis $ f_i $ suffers from all directly interfering flows, we can obtain a tighter worst-case traversal time $ R_i^* $, expressed with Equation~\ref{eq:response_new}. Note, that similar to Equation~\ref{eq:response_old}, Equation~\ref{eq:response_new} also has a recursive notion, thus is solved iteratively.
	
	\begin{equation}
		\label{eq:response_new}
		\footnotesize
		R_i^* = C_i + \sum\limits_{\forall f_j \in {\cal F}_D(f_i)} \left\lceil \frac{R_i^* + J_j^R + J_j^I}{T_j} \right\rceil \times I_{j,i}
	\end{equation}
	
\subsection{Observations}
\label{sec:observations}
	
	When we compare the existing and the new method to compute the worst-case traversal times of individual flows, we can notice several interesting facts:
	
	{\bf Observation 1: }When compared with the existing one, the proposed method never underperforms. Indeed, the new approach always derives upper-bounds which are either the same, or less pessimistic. In other words, it always holds that $ \forall f_i \in {\cal F} : R^*_i \leq R_i $. We prove that with Theorem~\ref{t:tightness}.
	
	\begin{theorem}
		\label{t:tightness}
		The worst-case traversal time, of any flow of the flow-set, computed with the new method, is either equal to or less pessimistic than the one obtained with the existing method.
	\end{theorem}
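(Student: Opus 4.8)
The plan is to reduce the claim to a monotonicity argument on the two fixed-point iterations that define $R_i$ (Equation~\ref{eq:response_old}) and $R_i^*$ (Equation~\ref{eq:response_new}). First I would establish the elementary inequality $I_{j,i} \le C_j$ for every $f_j \in \mathcal{F}_D(f_i)$. This is immediate from Equation~\ref{eq:new_inf}, since both subtracted quantities $\sigma^{pre-CD}_{j,i}$ and $\sigma^{post-CD}_{j,i}$ are sums of non-negative terms: the latencies $d_l,d_r$ are non-negative, the path cardinalities are non-negative, and the $\max\{0,\cdot\}$ guard keeps the router contribution non-negative. Hence replacing $C_j$ by $I_{j,i}$ in the interference sum can only shrink each summand.

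Second, I would view the two equations as searches for the least fixed point of the maps $g_i(x) = C_i + \sum_{f_j \in \mathcal{F}_D(f_i)} \lceil (x + J_j^R + J_j^I)/T_j \rceil\, C_j$ and $g_i^*(x) = C_i + \sum_{f_j \in \mathcal{F}_D(f_i)} \lceil (x + J_j^R + J_j^I)/T_j \rceil\, I_{j,i}$, each obtained by iterating from the common seed $x^{(0)} = C_i$. Two properties suffice: (i) both $g_i$ and $g_i^*$ are non-decreasing in $x$, because the ceiling function is non-decreasing and all coefficients ($C_j$, $I_{j,i}$) are non-negative; and (ii) $g_i^*(x) \le g_i(x)$ for every $x \ge 0$, by the inequality $I_{j,i} \le C_j$ from the first step together with non-negativity of the ceiling multipliers. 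A straightforward induction on the iteration index $k$ then shows that the $k$-th iterate of the new scheme never exceeds the $k$-th iterate of the old one: the base case is the shared seed, and the inductive step chains $g_i^*(y^{(k)}) \le g_i(y^{(k)}) \le g_i(x^{(k)})$ using (ii), the hypothesis $y^{(k)} \le x^{(k)}$, and the monotonicity (i). Passing to the converging fixed points yields $R_i^* \le R_i$.

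The remaining care concerns the jitter terms and non-convergence. If the interference jitter $J_j^I$ is itself derived from the worst-case traversal time of $f_j$, I would instead run the argument over flows taken in order of decreasing priority: the highest-priority flow has $R = R^* = C$ with no interference, and for a general $f_i$ the hypothesis $R_j^* \le R_j$ for all higher-priority $f_j$ gives $J_j^{I*} \le J_j^I$, which only strengthens property (ii), so the iteration comparison still goes through. Finally, if the old iteration diverges (a starvation scenario, where $f_i$ is deemed unschedulable), the inequality is vacuous, and if the new iteration diverges then so does the old one, since the old iterates dominate the new ones throughout. The main obstacle I anticipate is making the fixed-point comparison fully rigorous in the presence of the deadline-based termination of the iteration and the possible recomputation of $J_j^I$; once the monotone-iteration framework is in place, the remainder is routine.
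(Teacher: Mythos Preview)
Your proposal is correct and takes a genuinely different, more rigorous route than the paper. The paper argues directly: it shows $\sigma_{j,i} = C_j - I_{j,i} \ge 0$, introduces a single preemption count $K_{j,i}$, sets $\sigma_i = \sum_{f_j \in \mathcal{F}_D(f_i)} K_{j,i}\,\sigma_{j,i} \ge 0$, and concludes $R_i - R_i^* = \sigma_i$. That last step implicitly treats the number of preemptions as identical under both analyses, which glosses over the fact that the ceiling multipliers in Equations~\ref{eq:response_old} and~\ref{eq:response_new} are evaluated at different arguments ($R_i$ versus $R_i^*$). Your approach instead casts both recurrences as least-fixed-point iterations of monotone maps, establishes the pointwise domination $g_i^*(x) \le g_i(x)$ from $I_{j,i} \le C_j$, and propagates the inequality through the iterates by induction; this cleanly handles the coupling between the interference magnitude and the preemption count that the paper's argument elides. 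Your priority-inductive treatment of $J_j^I$ and your handling of the divergence case are further refinements absent from the paper. In short, the paper's proof is shorter and conveys the right intuition (per-hit interference shrinks, so the total shrinks), while your monotone-iteration argument is what actually closes the recursion rigorously.
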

	\begin{proof}
		Proven directly. Consider two flows $ f_i $ and $ f_j $, where $ P_j > P_i $ and ${\cal F}_D(f_i) = \{ f_j \} $. Let $ \sigma_{j,i} $ be the difference in the interference caused by a single preemption of $ f_j $ to $ f_i $, computed with the existing and the proposed method (Equation~\ref{eq:sigmai}).
		
		\footnotesize
		\begin{displaymath}
			\sigma_{j,i} = C_j - I_{j,i} = \sigma^{pre-CD}_{j,i} + \sigma^{post-CD}_{j,i} = |{\cal L}_{j,i}^{pre-CD}| \times d_l +
		\end{displaymath}
		\begin{equation}
			\label{eq:sigmai}
			max\left\{0, \left(|{\cal L}_{j,i}^{pre-CD}| - 1\right)\right\} \times d_r + |{\cal L}_{j,i}^{post-CD}| \times d_l
		\end{equation}
		\normalsize
		
		Since all the terms of Equation~\ref{eq:sigmai} are non-negative values, it follows that $ \sigma_{j,i} \geq 0 $. Let $ K_{j,i} $ be the number of preemptions that $ f_j $ can cause to $ f_i $ during the worst-case traversal time of $ f_i $. Also, let $ \sigma_i $ be the difference in the total interference caused to $ f_i $ by all its directly interfering flows, computed with the existing and the proposed method (Equation~\ref{eq:sigma}).
		
		\begin{equation}
			\label{eq:sigma}
			\footnotesize
			\sigma_i = \sum\limits_{\forall f_j \in{\cal F}_D(f_i)} \sigma_{j,i} \times K_{j,i}
		\end{equation}
		
		Since all the terms of Equation~\ref{eq:sigma} have non-negative values, it follows that $ \sigma_i \geq 0 $. Moreover, as the existing and the new method differ only in the way how the interference is computed, it follows that $ R_i - R^*_i = \sigma_i \geq 0 $. 
	\end{proof}
	
	Theorem~\ref{t:safety} provides a proof that the obtained upper-bounds are safe.
	
	\begin{theorem}
		\label{t:safety}
		The traversal time of any packet belonging to the flow $ f_i $ can not be greater than $ R^*_i $ (Equation~\ref{eq:response_new}), even in the worst-case conditions.
	\end{theorem}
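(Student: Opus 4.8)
The plan is to establish that $R_i^*$, the fixed point of Equation~\ref{eq:response_new}, is a safe upper bound by arguing that the reduced interference term $I_{j,i}$ still over-approximates the \emph{actual} interference that a single packet of any directly interfering flow $f_j$ can inflict on a packet of $f_i$ within their shared contention domain. The key structural fact, already motivated in Section~\ref{sec:pessimism} and guaranteed by XY routing (one single continuous CD per interfering pair), is that $f_j$ can delay $f_i$ only while some flit of $f_j$ occupies a link in ${\cal L}_{j,i}^{CD}$; moreover, from the moment the header of $f_j$ enters the CD until the moment its tail leaves the CD, the worst-case contiguous blocking $f_j$ imposes equals $C_j - \sigma^{pre-CD}_{j,i} - \sigma^{post-CD}_{j,i} = I_{j,i}$, because the header's traversal of pre-CD and the tail's drain through post-CD happen entirely outside the CD and cannot overlap with $f_i$ being held back on a CD link. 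So first I would make this per-preemption bound precise: fix an arbitrary packet instance of $f_i$ released at time $t$, and show that each individual packet of $f_j$ that contends with it contributes at most $I_{j,i}$ to $f_i$'s traversal time.

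Second, I would bound the \emph{number} of such contending packets of $f_j$ over the busy window of the $f_i$ instance. Since the new analysis changes only the per-preemption interference and reuses the release pattern, jitter terms $J_j^R$ and $J_j^I$, and periods $T_j$ from the established analysis of Shi and Burns~\cite{Zheng_Burns_08} (Prop.~4), the standard critical-instant argument applies verbatim: the worst case is when all directly interfering flows release at their maximal rates relative to the start of $f_i$'s window, giving at most $\lceil (L + J_j^R + J_j^I)/T_j \rceil$ packets of $f_j$ within a window of length $L$. Combining this count with the per-preemption bound $I_{j,i}$ and the interference-free term $C_i$ (Prop.~1) yields that the true traversal time of the $f_i$ instance is at most $C_i + \sum_{f_j \in {\cal F}_D(f_i)} \lceil (L + J_j^R + J_j^I)/T_j \rceil \times I_{j,i}$ whenever $L$ is itself a valid bound on that traversal time — i.e. the right-hand side of Equation~\ref{eq:response_new} is a monotone, order-preserving functional whose least fixed point $R_i^*$ dominates the actual traversal time. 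Standard fixed-point iteration (as in~\cite{Audsley_BRTW_93}) then gives the result, with the convention that non-termination means $f_i$ is declared unschedulable.

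The main obstacle I expect is justifying rigorously that a single packet of $f_j$ contributes \emph{at most} $I_{j,i}$ rather than something larger when combined with the pipelined, multi-hop nature of wormhole switching — in particular, ruling out the possibility that the pre-CD or post-CD portions of $f_j$'s traversal could somehow be "replayed" as additional delay to $f_i$ through indirect interference or through $f_j$ being itself blocked inside the CD. The clean way to handle this is to observe that indirect interference on $f_j$ is already fully accounted for by the interference jitter term $J_j^I$ (inherited unchanged from~\cite{Zheng_Burns_08}), so the only thing that must be argued afresh is the purely local claim about the CD: once $f_j$'s header is established on the CD link and $f_i$ is the next-highest contender there, $f_i$ is released the instant $f_j$'s tail vacates that link, and the header's earlier travel through pre-CD plus the tail's later drain through post-CD are disjoint in time from $f_i$'s wait. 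Making that disjointness precise — ideally by a timeline argument mirroring Figures~\ref{f:exmpl2}--\ref{f:exmpl4} — is the crux; everything after it reduces to the well-established response-time recurrence machinery.
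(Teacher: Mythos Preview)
Your plan is correct and rests on the same core observation as the paper: a packet of $f_j$ can block $f_i$ only while some flit of $f_j$ occupies the contention domain, so the header's pre-CD hops and the tail's post-CD hops are disjoint in time from any waiting by $f_i$, yielding the per-preemption bound $I_{j,i}=C_j-\sigma^{pre-CD}_{j,i}-\sigma^{post-CD}_{j,i}$. The decomposition, however, differs. The paper does not rebuild the response-time recurrence from scratch; instead it takes the safety of $R_i$ (Equation~\ref{eq:response_old}) as already established by Shi and Burns, then argues that within the interval accounted as interference in $R_i$ there is a cumulative sub-interval $\sigma_i=\sum_{f_j\in{\cal F}_D(f_i)}K_{j,i}\,\sigma_{j,i}$ during which $f_j$ is provably outside the CD and hence $f_i$ is not blocked, concluding that $R_i-\sigma_i=R^*_i$ is still safe. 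Your route---establish $I_{j,i}$ as a per-hit bound, inherit the hit-count term $\lceil (L+J_j^R+J_j^I)/T_j\rceil$ from~\cite{Zheng_Burns_08}, and invoke monotonicity of the fixed-point operator---is arguably cleaner and sidesteps a minor looseness in the paper's presentation (the paper writes $R^*_i=R_i-\sigma_i$ with $K_{j,i}$ implicitly tied to $R_i$, whereas the fixed point of Equation~\ref{eq:response_new} may in fact use a smaller $K_{j,i}$). The paper's reductive approach, on the other hand, is shorter because it offloads the entire critical-instant and jitter machinery to the cited result. Either way, the crux you correctly identify---the timeline disjointness mirrored in Figures~\ref{f:exmpl2}--\ref{f:exmpl4}---is exactly what the paper's proof spends its effort on, so you are aligned on the one step that actually requires new justification.
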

	\begin{proof}
		Proven directly. Shi and Burns have proven that $ R_i $ (Equation~\ref{eq:response_old}) presents the upper-bound on the worst-case traversal time of the flow $ f_i $ (see Theorem~1 in~\cite{Zheng_Burns_08}). Thus, if we prove that there exists some discontinuous time interval $ \sigma_i $ which is a part of $ R_i $, and during which $ f_i $ does not progress, nor any of its interfering flows causes the interference to it, we will prove that $ R_i - \sigma_i $ is also a safe upper-bound on the worst-case traversal time of $ f_i $.
		
		Consider two traffic flows $ f_i $ and $ f_j $, where $ P_j > P_i $ and $ {\cal F}_D(f_i) = \{ f_j \} $. According to Prop.~3, the entire traversal of $ f_j $ is considered as the interference that $ f_j $ causes to $ f_i $, and hence entirely contributes to $ R_i $. Let $ \sigma_{j,i} $ (Equation~\ref{eq:sigmai}) be the sum of: (i) $ \sigma^{pre-CD}_{j,i} $, which is the interval when the header flit of $ f_j $ traverses pre-CD and (ii) $ \sigma^{post-CD}_{j,i} $, which is the interval when the tail flit of $ f_j $ traverses post-CD. By definition, both a necessary and sufficient condition for the contention between $ f_i $ and $ f_j $ is that both of them attempt to traverse the CD section at the same time. However, during $ \sigma^{pre-CD}_{j,i} $ and $ \sigma^{post-CD}_{j,i} $, the flow $ f_j $ does not traverse CD, hence $ f_i $ can safely progress. Thus, the maximum interference that one packet of $ f_j $ can cause to $ f_i $ has a safe upper-bound, which is $ C_j - \sigma_{j,i} $. Subsequently, if during the traversal of $ f_i $, packets of $ f_j $ can appear at most $ K_{j,i} $ times, then the safe upper-bound on the interference that $ f_j $ can cause to $ f_i $ is $ K_{j,i} \times (C_j - \sigma_{j,i}) $. By elevating this reasoning, we conclude that the worst-case traversal time of $ f_i $ has a safe upper-bound $ R^*_i $ (Equation~\ref{eq:correctness}).
		
		\begin{equation}
			\label{eq:correctness}
			\footnotesize
			R^*_i = R_i - \sum\limits_{\forall f_j \in {\cal F}_D(f_i)} K_{j,i} \times \sigma_{j,i} = R_i - \sigma_i
		\end{equation}
	\end{proof}
	
	Let us observe the improvements of the proposed method over the existing one on a small-scale example given in Figure~\ref{f:exmpl1}, where the flow characteristics are as follows: $ size(f_1) = size(f_2) = 48B $, $ P_1 > P_2 $, $ T_1 = D_1 = T_2 = D_2 = 1000ns $. The other analysis parameters are given in Table~\ref{tab:params} (Section~\ref{sec:experiments}). For clarity purposes assume that the release jitters are equal to zero, i.e. $ J_1^R = J_2^R = 0 $. Moreover, since only two flows exist, there are no indirect interferences, thus interference jitters are equal to zero~\cite{Zheng_Burns_08}, i.e. $ J_1^I = J_2^I = 0 $.
	
	$ f_1 $ is the higher-priority flow, so its worst-case traversal time will be the same with both methods:
	
	\footnotesize
	\begin{displaymath}
		R_1 = R^*_1 = C_1 = |{\cal L}_1| \times d_l + \left(|{\cal L}_1| - 1\right) \times d_r + \left\lceil \frac{size(f_1)}{size(flit)} \right\rceil \times d_l =
	\end{displaymath}
	\begin{displaymath}
		7 \times 0.5ns + 6 \times 1.5ns + \left\lceil \frac{48B}{16B} \right\rceil \times 0.5ns = 14ns
	\end{displaymath}
	\normalsize
	
	However, the worst-case traversal time of $ f_2 $ is different. Let us first obtain the value with the existing analysis. For that, we need to compute $ C_2 $.
	
	\footnotesize
	\begin{displaymath}
		C_2 = |{\cal L}_2| \times d_l + \left(|{\cal L}_2| - 1\right) \times d_r + \left\lceil \frac{size(f_2)}{size(flit)} \right\rceil \times d_l =
	\end{displaymath}
	\begin{displaymath}
		3 \times 0.5ns + 2 \times 1.5ns + \left\lceil \frac{48B}{16B} \right\rceil \times 0.5ns = 6ns
	\end{displaymath}
	\normalsize
	
	We compute $ R_2 $ as follows:
	
	\footnotesize
	\begin{displaymath}
		R_2 = C_2 +  \left\lceil \frac{R_2 + J_1^R + J_1^I}{T_1} \right\rceil \times C_1
	\end{displaymath}
	\begin{displaymath}
		R_2^0 = 6 + \left\lceil \frac{0 + 0 + 0}{1000ns} \right\rceil \times 14ns = 6ns
	\end{displaymath}
	\begin{displaymath}
		R_2^1 = 6 + \left\lceil \frac{6ns + 0 + 0}{1000ns} \right\rceil \times 14ns = 20ns
	\end{displaymath}
	\begin{displaymath}
		R_2^2 = 6 + \left\lceil \frac{20ns + 0 + 0}{1000ns} \right\rceil \times 14ns = 20ns
	\end{displaymath}
	\normalsize
	
	Now, let us compute the worst-case traversal time of $ f_2 $ with the new approach. To do so, we first have to compute $ I_{1,2} $, which is the interference that $ f_1 $ causes to $ f_2 $ with a single preemption. The lengths of the relevant sections are: $ |{\cal L}_{1,2}^{pre-CD}| = 3 $, $ |{\cal L}_{1,2}^{CD}| = 1 $ and $ |{\cal L}_{1,2}^{post-CD}| = 3 $ (see Figure~\ref{f:exmpl1}).
	
	\footnotesize
	\begin{displaymath}
		I_{1,2} = C_1 - |{\cal L}_{1,2}^{pre-CD}| \times d_l - max\left\{0, \left(|{\cal L}_{1,2}^{pre-CD}| - 1\right)\right\} \times d_r
	\end{displaymath}
	\begin{displaymath}
		- |{\cal L}_{1,2}^{post-CD}| \times d_l
	\end{displaymath}
	\begin{displaymath}
		I_{1,2} = 14ns - 3 \times 0.5ns - 2 \times 1.5ns - 3 \times 0.5ns = 8ns
	\end{displaymath}
	\normalsize

	We compute $ R^*_2 $ as follows:
	
	\footnotesize
	\begin{displaymath}
		R_2^* = C_2 + \left\lceil \frac{R_2^* + J_1^R + J_1^I}{T_1} \right\rceil \times I_{1,2}
	\end{displaymath}
	\begin{displaymath}
		R_2^0 = 6 + \left\lceil \frac{0 + 0 + 0}{1000ns} \right\rceil \times 8ns = 6ns
	\end{displaymath}
	\begin{displaymath}
		R_2^1 = 6 + \left\lceil \frac{6ns + 0 + 0}{1000ns} \right\rceil \times 8ns = 14ns
	\end{displaymath}
	\begin{displaymath}
		R_2^2 = 6 + \left\lceil \frac{14ns + 0 + 0}{1000ns} \right\rceil \times 8ns = 14ns
	\end{displaymath}
	\normalsize
	
	We can see that the worst-case traversal time of $ f_2 $ obtained with the new approach is $ 14ns $, while it was $ 20ns $ with the existing analysis, which is an improvement of $ 6ns $, or in relative terms, an improvement of $ 30\% $.
	
	{\bf Observation 2: }The improvements of the proposed approach over the existing one depend on the lengths of the pre-CD, CD and post-CD sections of interfering flows. Consider again the example from Figure~\ref{f:exmpl1}, where $ P_1 > P_2 $. Assuming that the path of $ f_1 $ is constant, i.e. $ |{\cal L}_1| = const $, from Equations~\ref{eq:sigmai}-\ref{eq:sigma} it straightforwardly follows that the improvements in the worst-case traversal time of $ f_2 $ are greater when both $ |{\cal L}^{pre-CD}_{1,2}| $ and $ |{\cal L}^{post-CD}_{1,2}| $ are bigger. Since, the interference relationship between $ f_1 $ and $ f_2 $ is possible if and only if a contention between these two flows exist, i.e. $ |{\cal L}^{CD}_{1,2}| \geq 1 $, it follows that the necessary condition for the maximum improvements is: $ |{\cal L}^{pre-CD}_{1,2}| + |{\cal L}^{post-CD}_{1,2}| = |{\cal L}_1| - 1 $.
	
	We demonstrate that with an illustrative example given in Figure~\ref{f:exmpl5}. Consider the same flow characteristics as in the previous example, i.e. $ size(f_1) = size(f_2) = 48B $, $ P_1 > P_2 $, $ T_1 = D_1 = T_2 = D_2 = 1000ns, J_1^R = J_2^R = 0 $. If we compute the worst-case traversal times with both methods we get the following results: $ R_1 = R^*_1 = 14ns $, $ R_2 = 24ns $ and $ R^*_2 = 20.5ns $. When compared with the previous example, it is visible that the improvements in the worst-case traversal time of $ f_2 $ dropped from $ 6ns $ to $ 3.5ns $, or expressed relatively, from $ 30\% $ to less than $ 15\% $.
	
	Note, a special case occurs when paths of interfering flows entirely overlap, i.e. $ |{\cal L}^{pre-CD}_{1,2}| = {\cal L}^{post-CD}_{1,2}| = 0 $, $ |{\cal L}^{CD}_{1,2}| = |{\cal L}_1| $. In such scenarios there are no improvements because both approaches return the same values.
	
	\begin{figure*}[t]
		\centering
  		\begin{minipage}{0.32\linewidth}
			\includegraphics[scale=0.3]{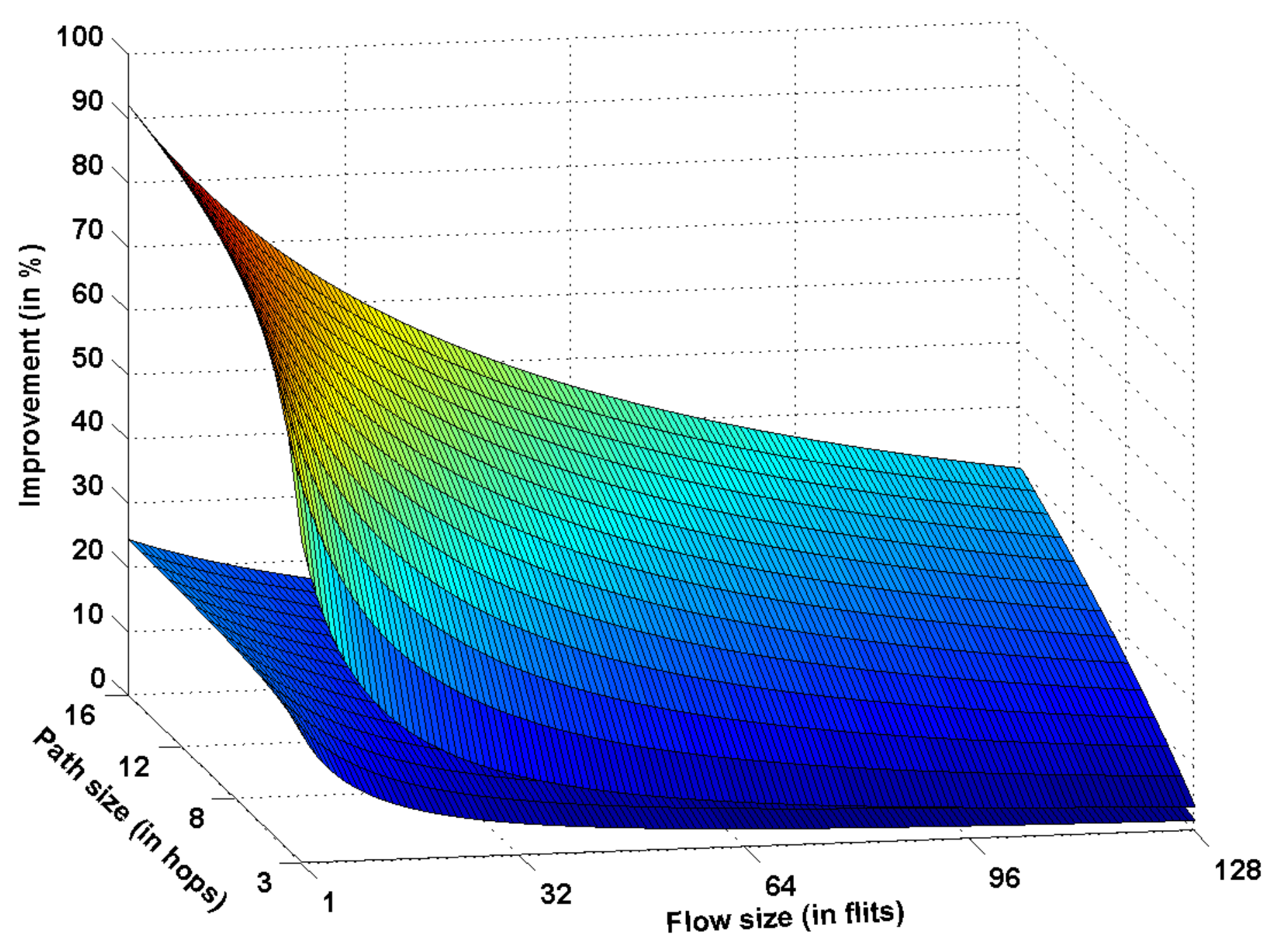}
  			\subcaption{\label{f:exp1a} $ |{\cal L}^{CD}_{1,2}| = 1 $ }
  		\end{minipage}
  		\begin{minipage}{0.32\linewidth}
			\includegraphics[scale=0.3]{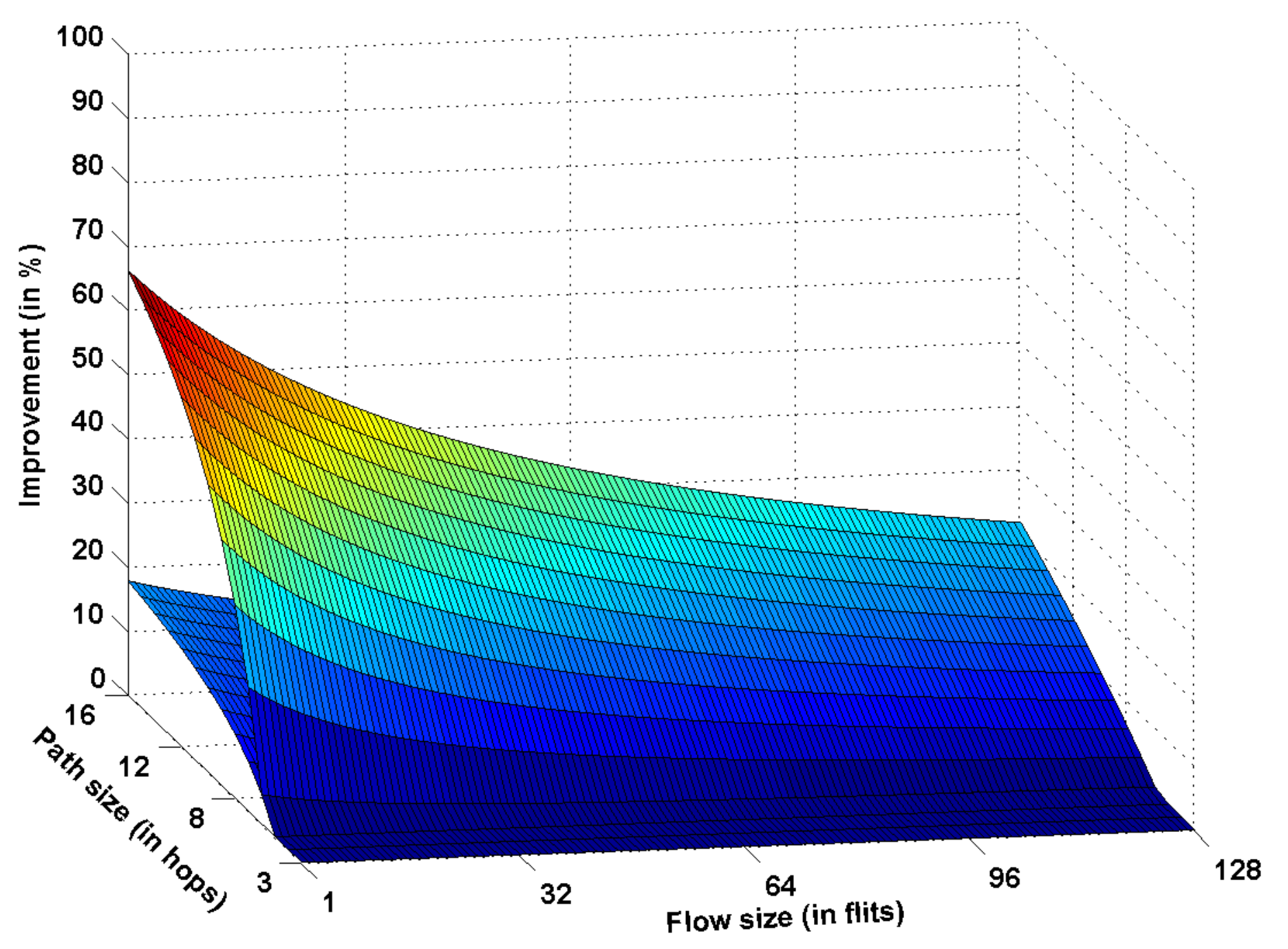}
  			\subcaption{\label{f:exp1b} $ |{\cal L}^{CD}_{1,2}| = min \{5, |{\cal L}_1| \} $}
  		\end{minipage}
  		\begin{minipage}{0.32\linewidth}
			\includegraphics[scale=0.3]{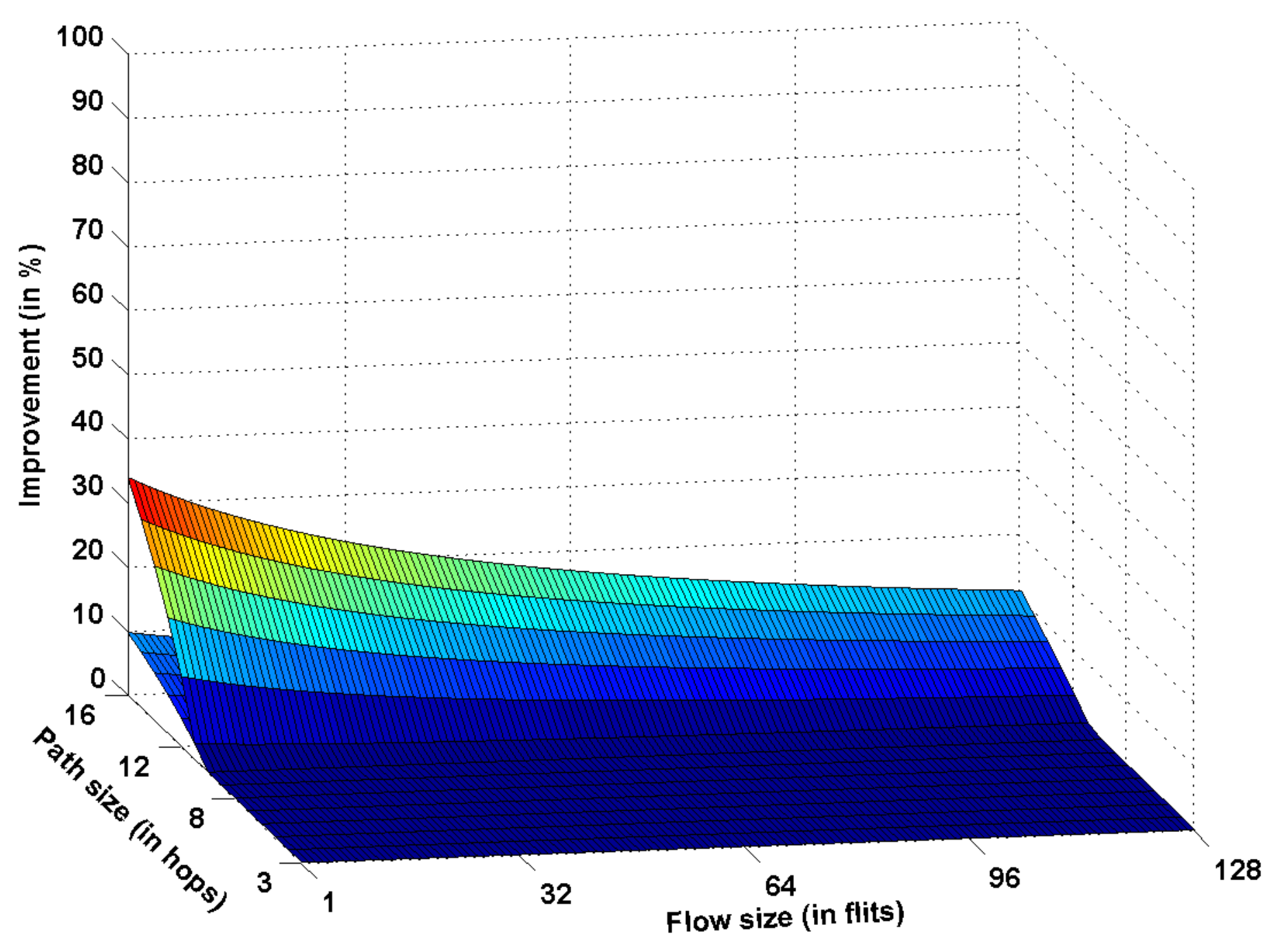}
  			\subcaption{\label{f:exp1c} $ |{\cal L}^{CD}_{1,2}| = min \{10, |{\cal L}_1| \} $}
  		\end{minipage}
  		\caption{\label{f:exp1} Numerical example with two contending flows $ f_1 $ and $ f_2 $, where $ P_1 > P_2 $ and $ |{\cal L}_1| = |{\cal L}_2| $}
	\end{figure*}
	
	{\bf Observation 3: }The improvements of the proposed approach over the existing one depend on the position of the CD section of interfering flows. Consider again the example given in Figure~\ref{f:exmpl1}, where $ P_1 > P_2 $. Assuming that the path of $ f_1 $ and the length of the CD section are constant, i.e. $ |{\cal L}_1| = const, |{\cal L}^{CD}_{1,2}| = const $, from Equations~\ref{eq:sigmai}-\ref{eq:sigma} it straightforwardly follows that the improvements in the worst-case traversal time of $ f_2 $ are more influenced by the length of the pre-CD than the post-CD section. Thus, a necessary and sufficient condition for maximum improvements is: $ |{\cal L}^{pre-CD}_{1,2}| = |{\cal L}_1| - 1, |{\cal L}^{CD}_{1,2}| = 1 $ and $ |{\cal L}^{post-CD}_{1,2}| = 0 $.
	
	\begin{figure}[t!]
		\centering
		\includegraphics[width=0.9\columnwidth]{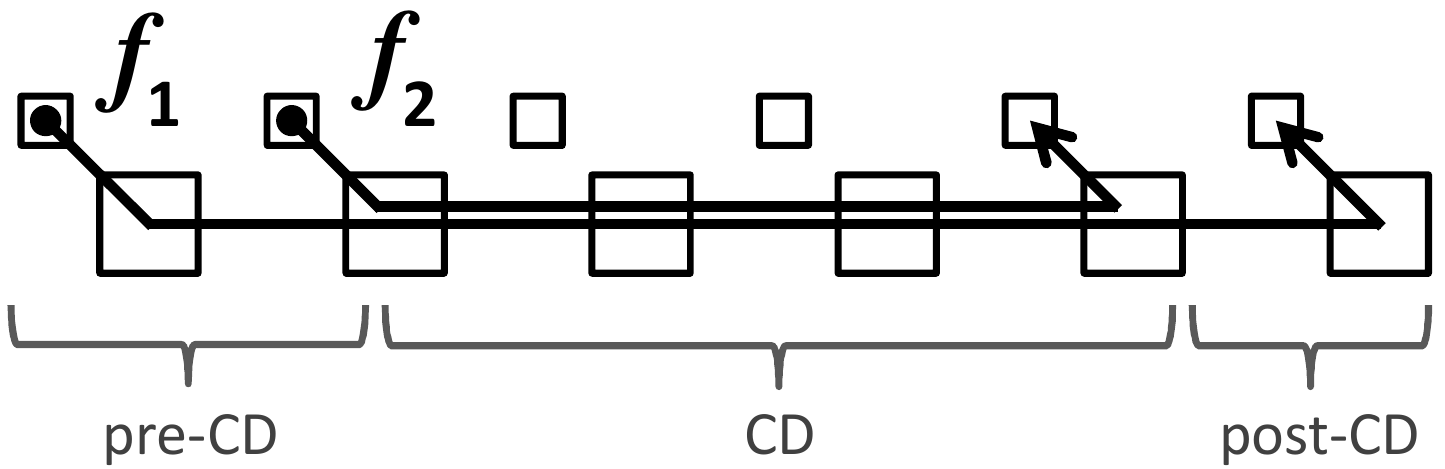}
  		\caption{\label{f:exmpl5} Example of traffic flows}
	\end{figure}
	
	To demonstrate that, let us analyse the example from Figure~\ref{f:exmpl6}, which differs from the example from Observation~1 (Figure~\ref{f:exmpl1}) only in the position of the CD section. If we compute the worst-case traversal times for both flows, assuming the same flow characteristics ($ size(f_1) = size(f_2) = 48B $, $ P_1 > P_2 $, $ T_1 = D_1 = T_2 = D_2 = 1000ns, J_1^R = J_2^R = 0 $), we reach the following values: $ R_1 = R^*_1 = 14ns $, $ R_2 = 20ns $ and $ R^*_2 = 12.5ns $. When compared with the example from Observation~1, it is visible that the improvements in the worst-case traversal time of $ f_2 $ increased from $ 6ns $ to $ 7.5ns $, or expressed relatively, from $ 30\% $ to $ 37.5\% $.
	
	\begin{figure}[t]
		\centering
		\includegraphics[width=0.9\columnwidth]{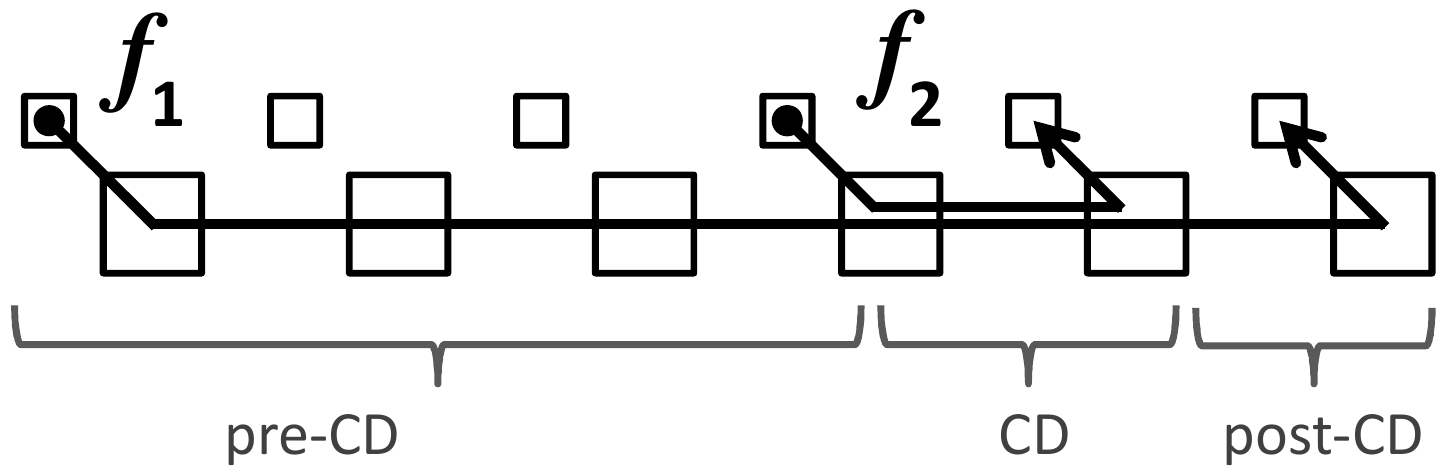}
  		\caption{\label{f:exmpl6} Example of traffic flows}
	\end{figure}
	
	{\bf Observation 4: }The improvements of the proposed approach over the existing one do not depend on flow-sizes. Indeed, from Equations~\ref{eq:sigmai}-\ref{eq:sigma} it is visible that only the paths, but not the sizes of the flows influence the improvements. Therefore, with the increase in flow sizes, the worst-case traversal times also grow, but the improvements remain the same in absolute values, and hence report decrease in relative values. If we compute the worst-case traversal times for the example given in Figure~\ref{f:exmpl1}, but this time with bigger flow sizes ($ size(f_1) = size(f_2) = 160B $, $ P_1 > P_2 $, $ T_1 = D_1 = T_2 = D_2 = 1000ns, J_1^R = J_2^R = 0 $), we get the following results: $ R_1 = R^*_1 = 17.5ns $, $ R_2 = 27ns $ and $ R^*_2 = 21ns $. Like in the equivalent case for $ size(f_1) = size(f_2) = 48B $, we again have $ R_2 - R^*_2 = 6ns $, however, the relative improvements drop from $ 30\% $ to $ 22.2\% $. This implies that, as flow sizes increase, the absolute improvements remain unaffected, however, the relative improvement decrease. Indeed, in a hypothetical case with flows of infinitely large sizes, the relative improvements asymptotically converge towards $ 0\% $.
	
\subsection{Numerical Example}
\label{sec:example}

	In order to get a better insight into how different parameters influence analysis improvements, we use a small-scale numerical example. We consider two contending flows $ f_1 $ and $ f_2 $ where $ P_1 > P_2 $ and ${\cal F}_D(f_2) = \{ f_1 \} $. We vary the length of the CD section, the sizes, and the paths of the flows. For each particular scenario we compute the worst-case traversal times of $ f_2 $ with both approaches, and observe the relative improvements of the proposed approach. Figures~\ref{f:exp1a}-\ref{f:exp1c} demonstrate the results. In each figure a lower surface covers a corner case where the length of the pre-CD section is equal to zero, i.e. $ |{\cal L}^{pre-CD}_{1,2}| = 0 $ and $ |{\cal L}^{post-CD}_{1,2}| = |{\cal L}_1| - |{\cal L}^{CD}_{1,2}| $, while the upper surface represents the opposite corner case, i.e. the length of the post-CD section is equal to zero, i.e. $ |{\cal L}^{pre-CD}_{1,2}| = |{\cal L}_1| - |{\cal L}^{CD}_{1,2}| $ and $ |{\cal L}^{post-CD}_{1,2}| = 0 $. The trends in Figures~\ref{f:exp1a}-\ref{f:exp1c} coincide with all the conclusions from Observations~1-4. Moreover, it is visible that the improvements are equal to zero in cases where the path of $ f_1 $ entirely belongs to the CD section, i.e. $ |{\cal L}^{CD}_{1,2}| = |{\cal L}_1| $, which has already been inferred in Observation~2.

\section{Evaluation}
\label{sec:experiments}
	
	\begin{table}[h]
		\centering
		\caption{ Analysis parameters}
		\begin{tabular}{|c|c|}
			\hline
			Platform size & \bf{8 $ \times $ 8} \\ 
			\hline
			Link width = flit size & \bf{16B} \\
			\hline
			NoC frequency & \bf{2GHz} \\
			\hline
			Router delay - $ d_r $ & \bf{3 cycles (1.5ns)} \\
			\hline
			Link delay - $ d_l $ & \bf{1 cycle (0.5ns)} \\
			\hline
			Flow periods & \bf{[1 - 10]ms}\\
			\hline
		\end{tabular}
		\label{tab:params}
	\end{table}
	
	In the previous section we have studied the improvements of the proposed analysis over the existing one on small-scale illustrative examples consisting of only two flows. In this section we perform a large-scale comprehensive experimental evaluation with the main objective to compare the two approaches, but this time assuming flow-sets with hundreds of flows. This will help us to investigate whether the improvement trends from a small-scale example also hold for large flow-sets, and to what extent. Subsequently, we can identify scenarios (flow characteristics) for which the proposed approach reports the biggest improvements. Analysis parameters are given in Table~\ref{tab:params}\footnote{A period of each flow is randomly generated, within the given limits. If a generated flow-set is not schedulable, then periods of all flows are uniformly increased (even beyond the limits) until the flow-set becomes schedulable.}.
	
	\begin{experiment} Improvements wrt flow sizes \end{experiment}
	
	\begin{figure}[t]
		\centering
  		\includegraphics[width=0.9\columnwidth]{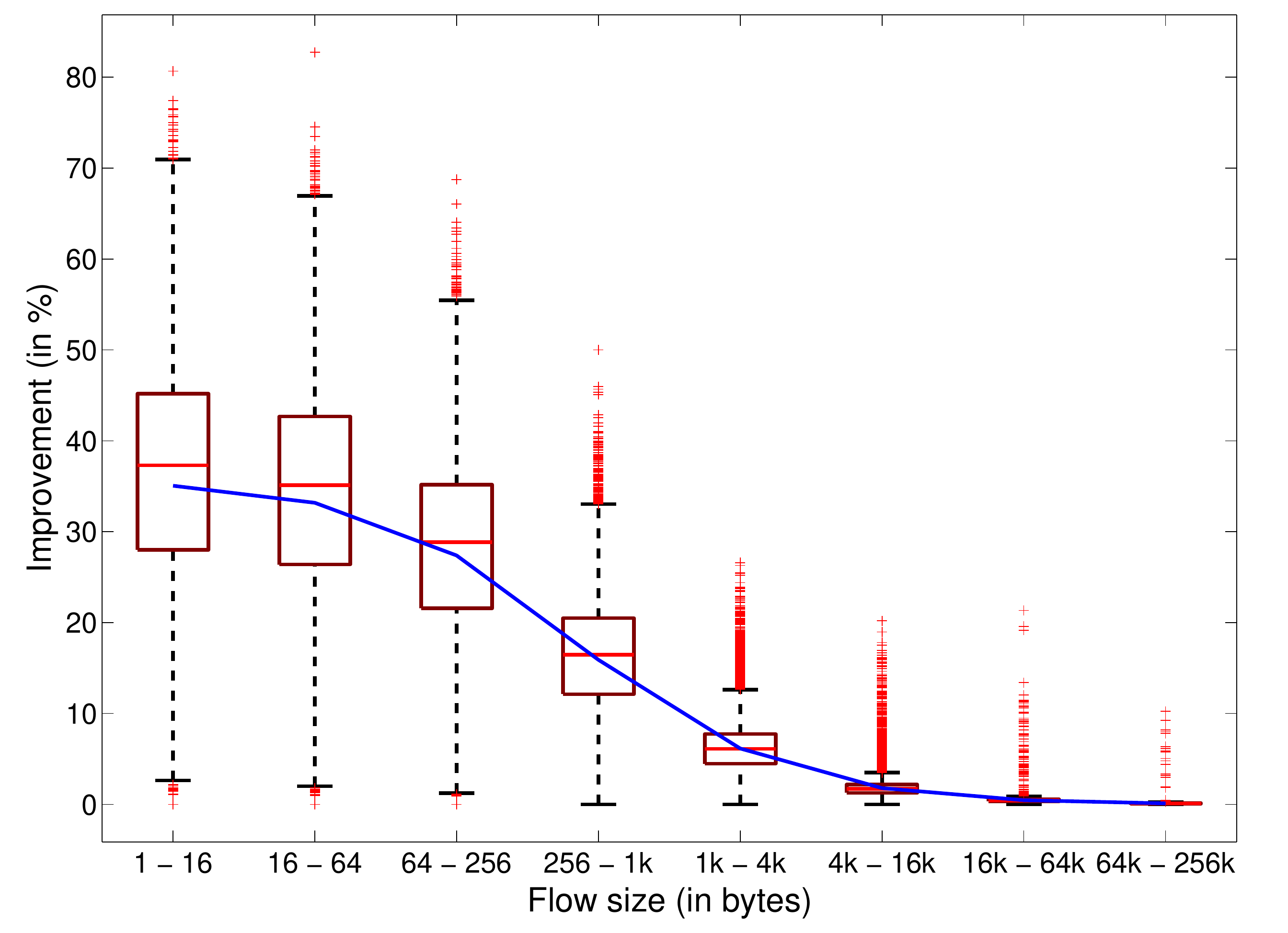}
  		\caption{\label{f:exp2} Improvements wrt flow sizes}
  	\end{figure}
	
	In this experiment we investigate how different flow sizes influence the improvements of the proposed analysis over the existing one. We generate several categories of flow-sets, with different flow size ranges: $ 1B - 16B, 16B - 64B, ..., 64kB - 256kB $. For each category we generate $ 100 $ flow-sets, where a flow-set consists of $ 200 $ flows. The size of each flow is randomly generated, within the limits of the respective flow-set category. Also, for each flow, the priority, the source core and the destination core are randomly generated, where flow paths comply with the XY routing policy. Subsequently, for each flow of the flow-set we compute the worst-case traversal time with both methods, and measure the improvements in relative terms.
	
	Figure~\ref{f:exp2} shows the results. Since the improvements do not depend on flow sizes (see Observation~4), the absolute improvements of the new approach over the existing one are constant, irrespective of the flow sizes. However, as the increase in the flow size causes a uniform increase in the worst-case traversal times obtained with both methods, the relative improvements of the new approach (y-axis) decrease as the flow sizes increase (x-axis). Another interesting finding is that, irrespective of the flow sizes, there are always flows for which the new approach does not yield better results. These are the highest-priority flows which do not suffer any interference, hence for them both methods return the same results.
	
	\begin{experiment} Improvements wrt paths sizes \end{experiment}
	
	\begin{figure}[t]
  		\centering
		\includegraphics[width=0.9\columnwidth]{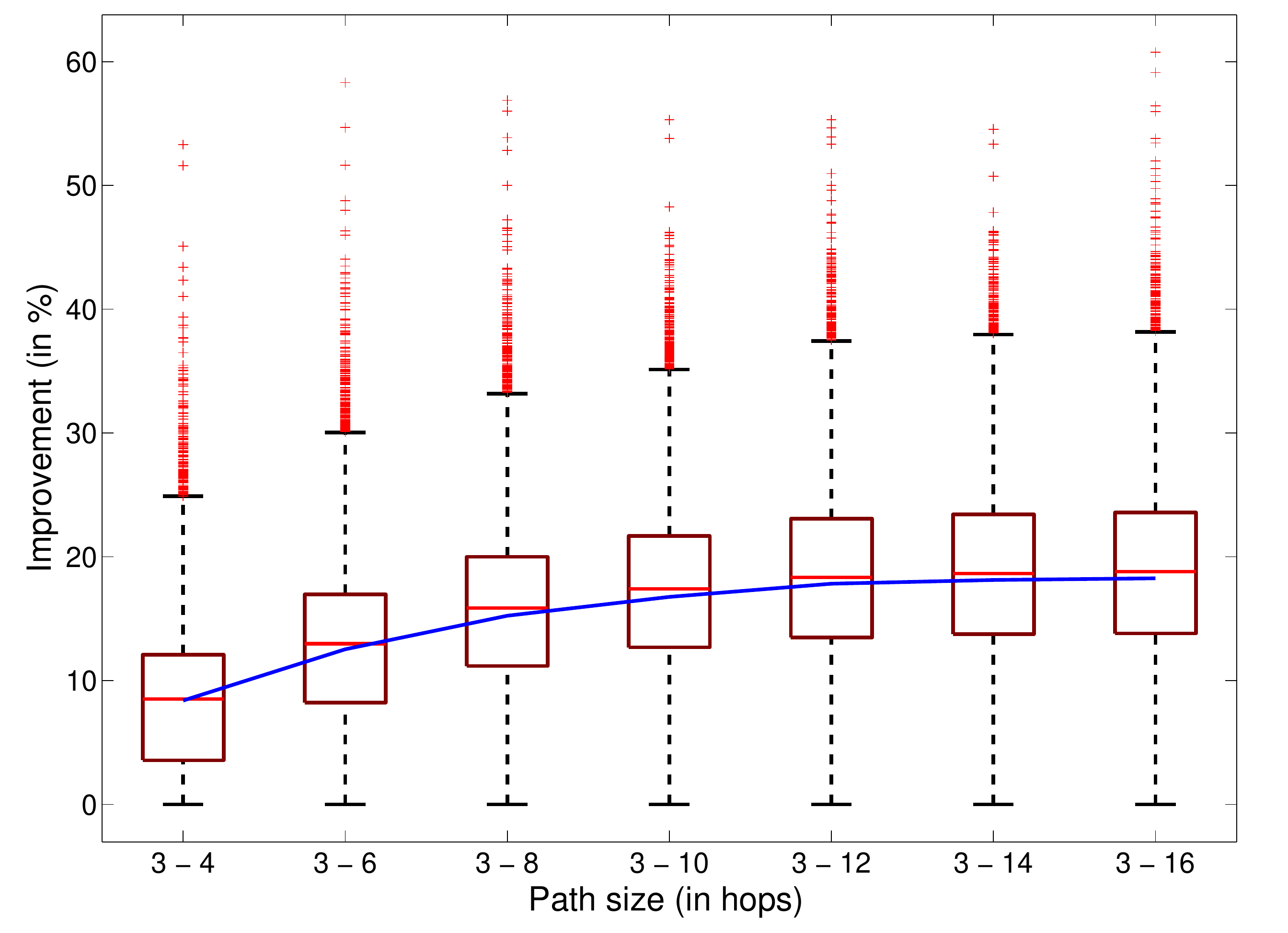}
		\caption{\label{f:exp3} Improvements wrt path sizes}
	\end{figure}
	
	In this experiment we vary the lengths of flow paths and observe their influence on the improvements. We again generate several categories of flow-sets, but this time with different lengths of flow paths: $ 3-4, 3-6, ..., 3-16 $. For each category $ 100 $ flow-sets are generated, where a flow-set consists of $ 200 $ flows. The priority, the source and destination cores are generated randomly for each flow, but in accordance with the constraint on the maximum path size, posed by the respective category to which the flow-set belongs. Each flow has a size which is randomly generated in the range $ [1B - 1kB] $. We compute the worst-case traversal times of all flows with both methods, and observe the improvements in relative terms.
	
	Figure~\ref{f:exp3} demonstrates that as the paths of the flows increase (x-axis), the relative improvements also increase (y-axis). The explanation is that short paths substantially decrease the number of contentions and interferences, hence decreasing the scenarios in which the new approach can cause improvements. In fact, even in scenarios where contentions do occur, due to short flow paths, CD sections cover large fractions of them, which has a significant impact on the improvements (Observation~2). Conversely, longer paths cause more interferences, but also longer pre-CD and post-CD sections. All these facts have a positive effect on the improvements (see Observation~2).
	
	\begin{experiment} Improvements wrt flow and path sizes \end{experiment}
	
	\begin{figure}[t]
		\centering
  		\includegraphics[width=0.9\columnwidth]{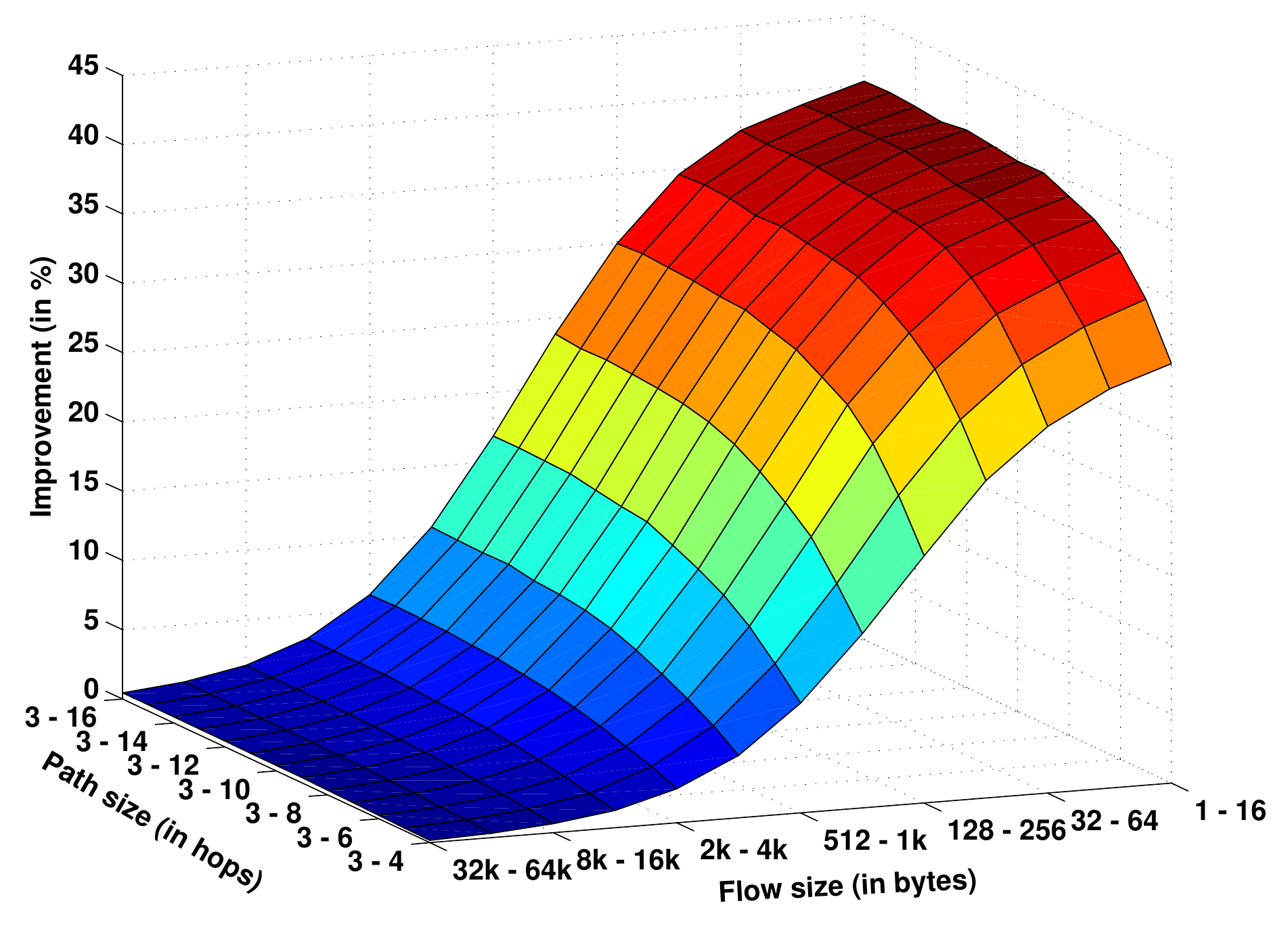}
  		\caption{\label{f:exp4} Improvements wrt flow and path sizes}
  	\end{figure}
	
	The objective of this experiment is to get a better insight into how both the aforementioned flow characteristics (the flow size and the path size) together influence the improvements of the new method over the existing one. We generate different flow-set categoties where we vary both parameters. Again, each category consists of $ 100 $ flow-sets, each with $ 200 $ flows with randomly generated priorities, source and destination cores. For each flow we compute the worst-case traversal times with both methods. Subsequently, for each category we observe the average improvements achieved with the new method, expressed in relative terms.
	
	Figure~\ref{f:exp4} shows the improvement trends (z-axis) associated with flow sizes (x-axis) and path sizes (y-axis). The improvement trends are identical to those from Experiments~1-2, inferring that the increase in the flow sizes and the decrease in the path sizes both have a negative effect on the relative improvements. This infers that small flows with long paths benefit the most from the proposed approach. Note, that a similar conclusion was reached for a small-scale numerical example with two flows (see Section~\ref{sec:example} and Figure~\ref{f:exp1}).
	
	\begin{experiment} Improvements wrt flow-set sizes \end{experiment}
	
	\begin{figure}[t]
  		\centering
  		\includegraphics[width=0.9\columnwidth]{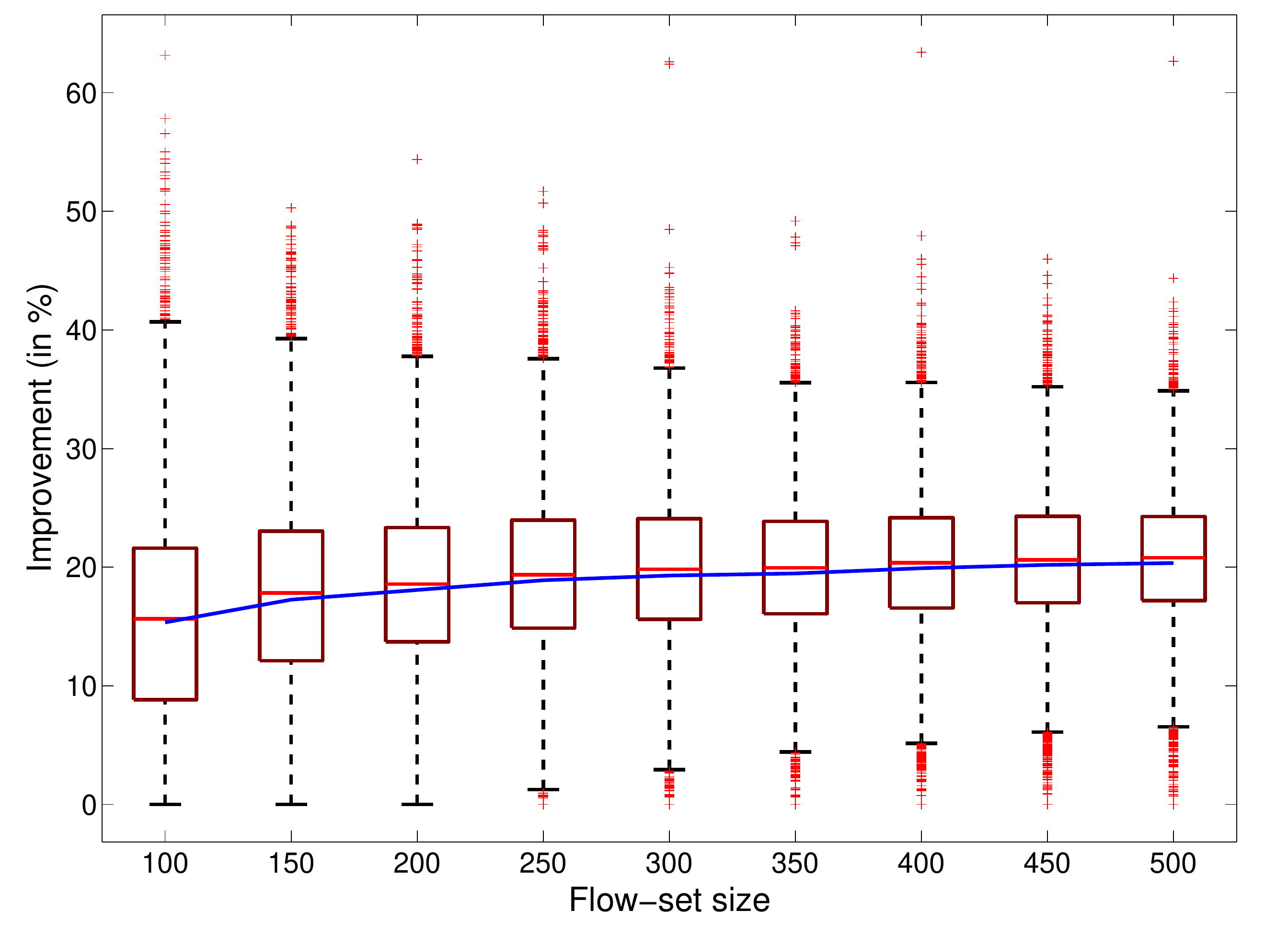}
		\caption{\label{f:exp5} Improvements wrt flow-set sizes}
	\end{figure}
	
	The emphasis of this experiment is on the flow-set size. In other words, we investigate how the improvement trends change with the number of flows constituting a flow-set. We generate several flow-set categories, where each category has the number of flows equal to one of these values $ 100, 150, ..., 500 $. For each category $ 100 $ flow-sets are generated, with random priorities, source and destination cores. Moreover, each flow has a size which is randomly generated in the range $ [1B - 1kB] $. Subsequently, for each flow we obtain the worst-case traversal times with both methods and compare the results in relative terms.
	
	Figure~\ref{f:exp5} demonstrates that as the flow-set size increases, so do the improvements. The explanation is that larger flow-sets have more substantial contentions, which favours the new approach. Yet, irrespective of the flow-set size, there always exist the highest-priority flows which do not suffer interference and hence for them no improvements can be achieved. However, as the flow-set size increases, the highest-priority flows constitute smaller and smaller fraction of the entire flow-set, hence for sets with more than $ 200 $ flows these cases are below the $ 25^{th} $ percentile, and are thus classified as outliers (depicted with red crosses in Figure~\ref{f:exp5}).
	
	\begin{experiment} Improvements wrt priorities \end{experiment}
	
	\begin{figure}[t]
		\centering
  		\includegraphics[width=0.9\columnwidth]{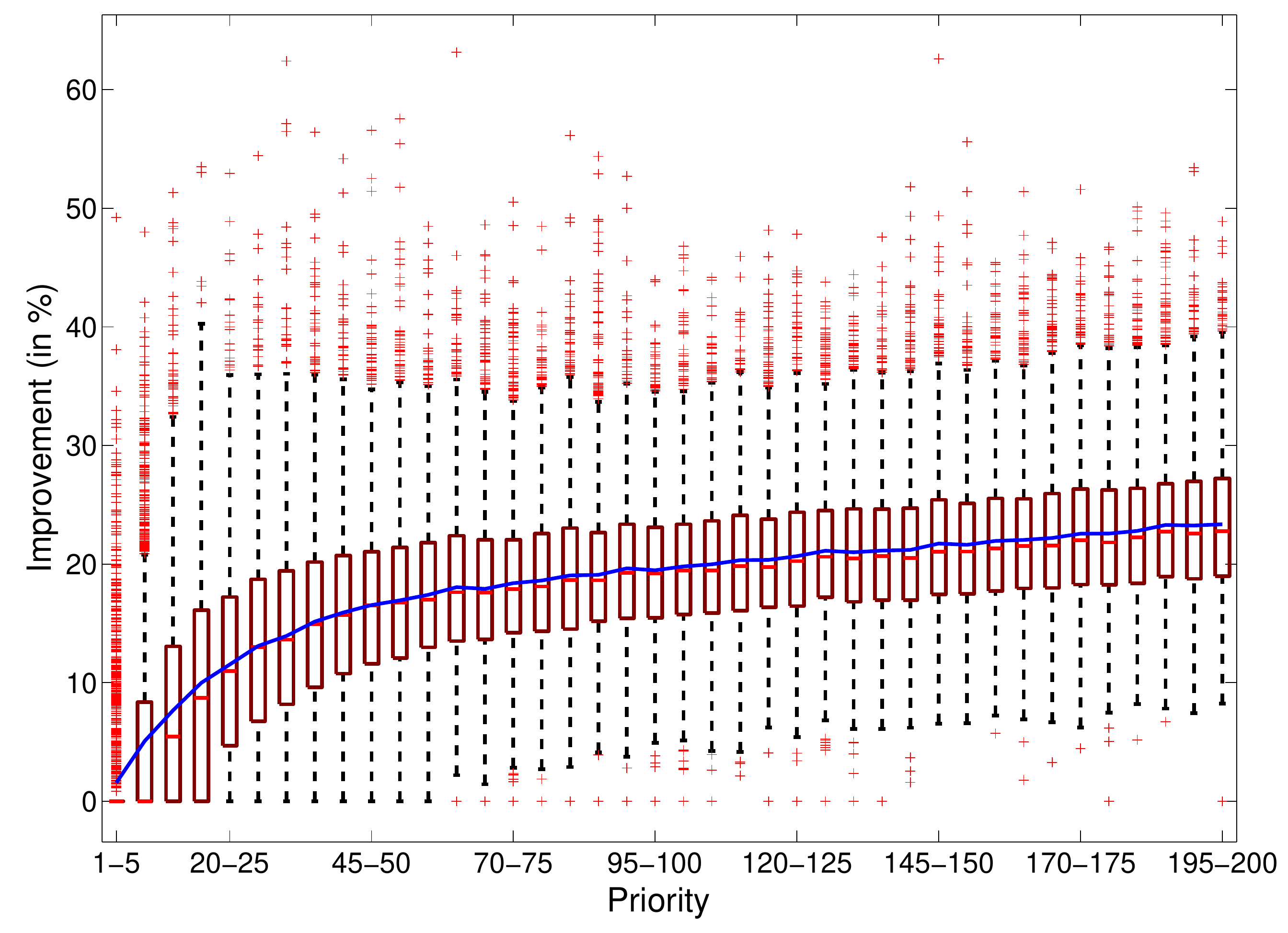}
  		\caption{\label{f:exp6} Improvements wrt priorities}
  	\end{figure}
	
	In this experiment we investigate how the improvement trends change with different flow priorities. We generated $ 100 $ flow-sets, each with $ 200 $ flows, where priorities, flow sizes, source and destination cores have been randomly generated. A size of each flow is a randomly generated value from the range $ [1B - 1kB] $. For each flow we compute the worst-case traversal times with both methods, and express the improvements achieved by the new approach, in relative terms.
	
	Figure~\ref{f:exp6} confirms that, as flow priorities decrease (bigger numbers on the x-axis), the relative improvements increase (y-axis). This confirms our initial assumption that the new approach does not produce significant improvements for the highest-priority flows, because these flows suffer very little interference (if at all). As flow priorities decrease, the interference that flows suffer becomes more substantial, which favours the new approach.
	
	\begin{experiment} Analysis tightness \end{experiment}
	
	\begin{figure}[t]
  		\centering
  		\includegraphics[width=0.9\columnwidth]{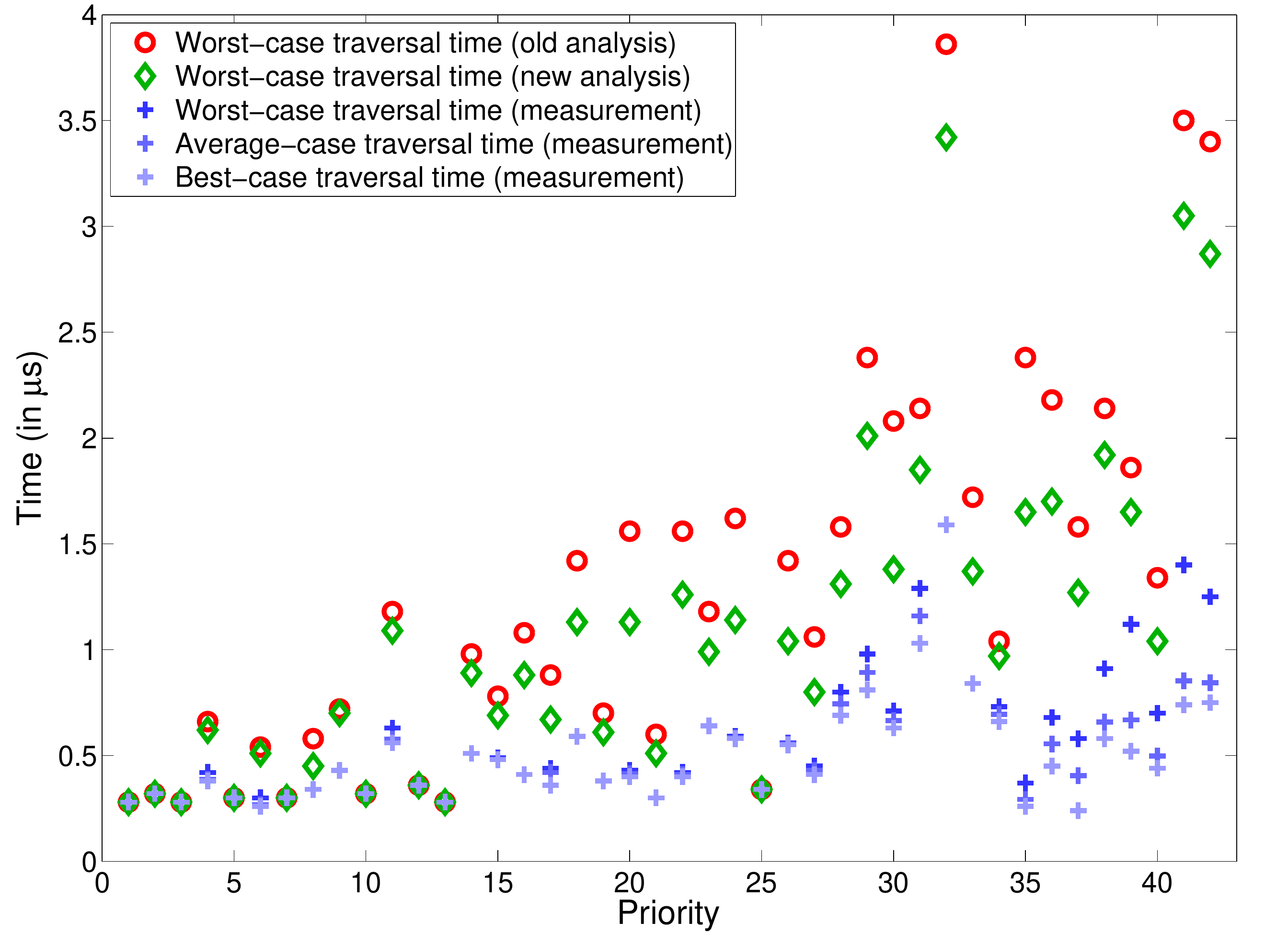}
		\caption{\label{f:exp7} Analysis tightness}
	\end{figure}
	
	The objective of this experiment is to investigate the tightness of the obtained upper-bounds on the worst-case traversal times. To achieve this, we generate a single flow-set consisting of $ 42 $ flows, and map it on a $ 6 \times 6 $ platform with randomly generated source and destination cores. Each flow has a payload in the range $ [2 - 48] $ flits, and one additional flit for a header. Moreover, each flow has a period in the range $ [0.5 - 9] ms $, and a unique priority. The NoC frequency is $ 100 $MHz. We compute the worst-case traversal times of all flows with both analyses, and also simulate the execution on a cycle-accurate simulator. The simulated time is $ 2 $ hyper-periods\footnote{A hyper-period is the least common multiplier of all flow-periods.}. Subsequently, we compare the analysis estimates with the values obtained via simulations, namely (i) the worst-case, (ii) the average-case and (iii) the best-case.
	
	Results are depicted in Figure~\ref{f:exp7}. It is visible that for high-priority flows both the analyses provide tight bounds. This is expected, given that these flows suffer very little interference, if at all. With the decrease in the priority (bigger numbers on the x-axis), the differences between the observed values and the analysis results become more noticeable (y-axis), because the analysis pessimism accumulates. Also, notice that the difference between the analysis estimates increases with the decrease in priorities. In fact, in some cases the proposed approach provides significantly tighter estimates, which entirely coincides with the conclusions from the previous experiments, and further motivates this work. However, the results also suggest that there is still room for improvement in the analysis, and we see this as a potential topic for future work.
	
\section{Conclusions and Future Work}

	The real-time communication analysis is the most suitable approach to determine whether a NoC-based many-core platform can accommodate the workload and always fulfil its timing requirements. In safety-critical and real-time computing areas, providing strong guarantees that all posed constraints will always be met, even in the worst-case conditions, is of paramount importance. Yet, one of the greatest challenges of analysis-based approaches is the pessimism, which can lead to a significant underutilisation of the platform and inefficient use of available resources.
	
	In this paper we have extended the existing real-time communication analysis~\cite{Zheng_Burns_08} for wormhole-switched priority-preemptive NoCs, and proposed a new, less pessimistic technique to compute tighter upper-bound estimates on the worst-case traversal times of traffic flows. Specifically, our approach reduces the pessimism of the direct interference that contending traffic flows cause to each other when competing for NoC resources. Subsequently, we have quantified the improvements (i.e. pessimism reduction, tightness) achieved with the new analysis, and also observed how these trends change with different flow parameters, namely, flow sizes, path sizes, flow-set sizes, priorities. The experiments demonstrate that the proposed approach yields significant improvements in almost all cases, while the greatest pessimism reductions are achieved in scenarios with large flow-sets, where flows have small sizes and traverse long paths. These traffic characteristics correspond to control core-to-core traffic as well as to read requests and write responses in core-to-memory traffic. Given that these traffic types constitute a significant fraction of the entire NoC traffic, the proposed analysis not only can help to exploit the platform more efficiently and decrease the resource overprovisioning, but also can render many flow-sets schedulable, even though the existing analysis classified them as unschedulable. Motivated with this fact, we plan to further explore the possibilities to improve the analysis, especially in the domain of indirect interferences, which have been, for clarity purposes, kept out of scope of this paper.

\balance
\bibliographystyle{abbrv}

% Non-BibTeX users please use

\providecommand{\noopsort}[1]{} \providecommand{\url}{\error{The bib files now
  require `url' package!}}

\end{document}